\newtheorem{proposition}{Proposition}
\newtheorem{theorem}{Theorem}
\newtheorem{corollary}{Corollary}
\def\I{{\mathcal I}}
\def\I{{\mathcal I}}
\def\NP{${\mathcal{NP}}$}
\def\conv{\mathop{\rm conv}}
\def\size{\mathop{\rm size}}
\def\R{{\mathbb R}}
\def\Z{{\mathbb Z}}
\def\ie{{i.e.,} }
\def\N{$\mathcal{N}$}
\def\M{$\mathcal{M}$}
\def\CM{$\mathcal{C-M}$}
\def\PM{$\mathcal{PM}$}
\def\EC{$\mathcal{EC}$}
\def\CEC{$\mathcal{C-EC}$}
\def\SS{$\mathcal{SS}$}
\def\CSS{$\mathcal{C-SS}$}
\def\VC{$\mathcal{VC}$}
\def\CVC{$\mathcal{C-VC}$}
\def\PVC{$\mathcal{PVC}$}
\def\01{\ensuremath{0\mathord{-}1}}
\def\01{0-1}
\newcounter{mynotes}
\newcommand{\rednote}[1]{\addtocounter{mynotes}{1}{\textcolor{red}{$^{\arabic{mynotes}}$}}%
\marginpar{\scriptsize \textcolor{red}{ {\arabic{mynotes}.\ {\sf {#1}}}}}}
\newcommand{\cnote}[1]{\rednote{#1}}
\def\thm@space@setup{%
  \thm@preskip=1ex plus 1ex minus .2ex
  \thm@postskip=\thm@preskip 
}
\titlespacing*{\section} {0pt}{2.5ex plus 1ex minus .2ex}{1.3ex plus .2ex}
\titlespacing*{\subsection} {0pt}{2.25ex plus 1ex minus .2ex}{0.5ex plus .2ex}
\titlespacing*{\subsubsection}{0pt}{2.25ex plus 1ex minus .2ex}{0.5ex plus .2ex}
\titlespacing*{\paragraph} {0pt}{1.25ex plus 1ex minus .2ex}{1em}
\titlespacing*{\subparagraph} {\parindent}{1.25ex plus 1ex minus .2ex}{1em}
\title{Totally Unimodular Congestion Games}
\author{Alberto Del Pia\thanks{Department of Industrial and Systems Engineering \& Wisconsin Institute for Discovery, University of Wisconsin-Madison. Email: \url{delpia@wisc.edu}}%
\and Michael Ferris\thanks{Department of Computer Sciences, University of Wisconsin-Madison. Email: \url{ferris@cs.wisc.edu}}%
\and Carla Michini\thanks{Wisconsin Institute for Discovery, University of Wisconsin-Madison. Email: \url{michini@wisc.edu}}%
}
\date{\today}
\begin{document}

\abovedisplayskip=6pt plus 3pt minus 9pt
\abovedisplayshortskip=0pt plus 3pt
\belowdisplayskip=6pt plus 3pt minus 9pt
\belowdisplayshortskip=3pt plus 3pt minus 4pt

\pagenumbering{gobble}

\maketitle


\begin{abstract}
We investigate a new class of congestion games, called \emph{Totally Unimodular (TU) Congestion Games},
where the players' strategies are binary vectors inside polyhedra defined by totally unimodular constraint matrices.
Network congestion games belong to this class.

In the symmetric case, when all players have the same strategy set, 
we design an algorithm that finds an optimal aggregated strategy and then decomposes it into the single players' strategies.
This approach yields strongly polynomial-time algorithms to (i) find a pure Nash equilibrium, and (ii) compute a socially optimal state, if the delay functions are weakly convex.
We also show how this technique  can be extended to matroid congestion games.

We introduce some combinatorial TU congestion games, where the players' strategies are matchings, vertex covers, edge covers, and stable sets of a given bipartite graph. 
In the asymmetric case, we show that for these games (i) it is PLS-complete to find a pure Nash equilibrium even in case of linear delay functions, and (ii) it is NP-hard to compute a socially optimal state, even in case of weakly convex delay functions.


\end{abstract}

\clearpage


\pagenumbering{arabic}

\section{Introduction}
\label{sec: intro}
A central problem of Algorithmic Game Theory concerns the existence of Nash equilibria and the design of polynomial-time algorithms for their computation.
Nash equilibria are a very powerful and well-studied solution concept, and they have had tremendous impact in economics and social sciences \cite{HolRot04}.
A \emph{mixed} strategy is a probability distribution over the \emph{pure} strategies of a player.
In his celebrated work \cite{Nash50,Nash51}, John Nash proved the existence of mixed equilibria for any game with a finite set of strategies.
In some applications mixed Nash equilibria seems to have no natural interpretation, and thus, pure Nash equilibria are sometimes a more suitable solution concept than mixed Nash equilibria \cite{Vet02,Ansetal04}. Unfortunately, pure Nash equilibria are not guaranteed to exist, and even if they do, they are often hard to compute.  

%
On the positive side, there are classes of games that are known to posses pure Nash equilibria. A prominent example are \emph{potential games}, originally introduced by Monderer and Shapley \cite{MonSha96}. The key property of potential games is the existence of a \emph{potential function}, i.e.~a function defined on the joint strategy set of the players, and such that, if any player unilaterally deviates from her strategy, the change in her payoff is equal to the change in the potential function.
A pure Nash equilibrium of a potential game can be found with a local search algorithm that minimizes the potential function over the strategy set of the game, where each iteration of the algorithm corresponds to an improving step of a player. Even if this algorithm is finite, it could take an exponential number of iterations to reach a local optimum, that is, a pure Nash equilibrium. Potential games are related to the complexity class \emph{Polynomial Local Search} (PLS) introduced by Johnson, Papadimitriou and Yannakakis \cite{JohPapYan88,SchYan91}.
In fact, this class includes all problems with a local search algorithm where each improving step can be performed in polynomial time.
Many families of potential games have been shown to be PLS-complete, suggesting that a polynomial-time algorithm to find a pure Nash equilibrium is unlikely to exist for these games in general.

In this work, we focus on \emph{congestion games}, a class of potential games that has been widely investigated in the literature (see, e.g.,~\cite{Ros73,RouTar,SkoVoc,FabPapTal04,Ros73bis,Car11,Cara12}), and we provide new insight on their computational complexity using a polyhedral approach.
In a congestion game, a set of resources is given, and each player selects a feasible subset of the resources in order to minimize her cost function. 
The cost of a player's strategy is the sum of the delays of the resources selected by the player, and the delay of each resource
is a function of
the total number of players using it.

%
An example are \emph{network congestion games}, where the resources are the arcs of a given digraph $D=(V,E)$ and the strategies of each player $i$ are all $(r^i,s^i)$-paths in $D$, for $r^i,s^i \in V$. Fabrikant et al.~\cite{FabPapTal04} gave an algorithm to find a pure Nash equilibrium in symmetric network congestion games, i.e.,~when all players share the same origin-destination pair. The algorithm is based on a reduction to minimum cost flow and runs in strongly polynomial time.
In the asymmetric case, network congestion games are PLS-complete, even in case of linear delay functions \cite{FabPapTal04,AckRogVoc08}.

Another class of symmetric congestion games for which a pure Nash equilibrium can be computed in polynomial time are \emph{matroid congestion games}, i.e.~congestion games where the strategy space of each player consists of the bases of a matroid over the set of resources. Ackermann et al.~\cite{AckRogVoc08} proved that for this class of games the best-response dynamics are polynomially bounded in the number of players and resources, and that the matroid property is also necessary for guaranteeing polynomial time convergence of the best-response dynamics to a Nash equilibrium. The work of Ackermann et al.~initiates the investigation of structural properties of congestion games that guarantee the polynomial time computability of a pure Nash equilibrium.
Their focus is on the convergence time for best responses, and on the local properties of the players' strategy spaces.
In contrast with this approach, we propose to exploit the global structure of the game and to study it from a polyhedral point of view.

\paragraph{Polyhedral approach.}
We study congestion games where the players' strategies are given implicitly through a polyhedral representation. 
For a congestion game with $N$ players and $n$ resources, let
$X^i = \{\chi^i \in \{0,1\}^n : \chi^i \text{ is the incidence vector of a strategy of player }i\}$,
and let $P^i = \{x^i \in [0,1]^n : A^i x^i \ge b^i\}$ be a polyhedron such that $P^i = \conv(X^i)$.
The game is symmetric if $A^i=A$ and $b^i=b$ for $i=1,\dots,N$.
%
%
Our main goal is to understand what properties of $P^i$ affect the computational complexity of finding a pure Nash equilibrium.
We investigate \emph{totally unimodular (TU) congestion games}, 
where $A^i$ is a $m \times n$ \emph{totally unimodular} (TU) matrix, i.e.~each square submatrix of $A^i$ has determinant equal to $0$, $+1$, or $-1$, and $b^i$ is integral, for all $i$.
It is easy to verify that network congestion games are of this form.
However, TU congestion games capture a larger array of congestion games. For example, consider a congestion game where the agents (e.g.~taxi drivers, call center operators) compete to supply their service to as many clients as possible in a finite time horizon.
Each time period is a resource that gets congested, since the profit of each agent in that period (the number of clients that each agent can serve) decreases with the total number of agents that are active.
Moreover, agents are typically not allowed to work in more than a certain number of consecutive time periods.
For example, taxi drivers may be constrained to work during at most a 8 hours time shift out of the $3$ shifts available in a day, and at least $5$ shifts per week. In this case, $n=3 \times 7$ is the number of shifts in a week, and $A$ is a binary matrix with the \emph{consecutive-ones property}, thus it is TU. It follows that this game can be modeled as a symmetric TU congestion game.


\paragraph{Our results.}
Our main contribution is to give an algorithm that runs in time polynomial in $n,m$ and $N$, to find a pure Nash equilibrium of symmetric TU congestion games defined using a TU matrix $A \in \R^{m\times n}$. 
Our algorithm is structured into two phases. In the first phase, we solve an aggregated problem, where we minimize the potential function over the \emph{aggregated strategies} of the players, to determine how many players should use each resource. Here, monotonicity of the delay functions and total unimodularity of the constraint matrix are crucial to reformulate the aggregated problem as a linear program.
In the second phase, we apply an algorithm that decomposes the optimal aggregated strategy into the single players' strategies. To this purpose, we rely on the so-called \emph{integer decomposition property} \cite{BauTro78}, holding for polyhedra defined by TU constraint matrices.

The same approach can be used to compute in strongly polynomial time a socially optimal state of a symmetric TU congestion game, if the delay functions are weakly convex.
Moreover, our two-phase algorithm can be applied also to (symmetric and asymmetric) matroid congestion games, and one of the crucial reasons why this is possible is that the independent set polytope of a matroid also has the integer decomposition property.
We further extend our algorithm to compute in strongly polynomial time a socially optimal state of a polymatroid congestion game, if the delay functions are weakly convex.

%

Concerning asymmetric congestion games, even if all the $P^i$s share the same TU constraint matrix $A$ and differ only in the right-hand side vectors $b^i$, $i=1,\dots,N$, we can directly show PLS-completeness of finding a pure Nash equilibrium and NP-hardness of finding a socially optimal state, since asymmetric network congestion games can be cast in this setting.
For a bipartite graph $G=(V,E)$ we introduce a number of combinatorial congestion games, where the strategies of each player are defined on a subgraph $G^i$ of $G$ as: $(i)$ matchings of $G^i$, $(ii)$ edge covers of $G^i$, $(iii)$ vertex covers of $G^i$, $(iv)$ stable sets of $G^i$. In the symmetric case, i.e., when $G^i = G$ for $i=1,\dots,N$, our algorithm runs in time polynomial in $|V|$, $|E|$, and $N$, and has combinatorial interpretations.
In the asymmetric case, we show  through a reduction from POS NAE 2SAT that all the above games are PLS-complete even in case of linear profit/delay functions. 
Moreover, if the delay functions are weakly convex, for all these combinatorial games it is NP-hard to compute a socially optimal state.

\section{Preliminaries}
\label{sec: definitions}
Let $(X,f)$ be a game with $N$ players, where $X^i$ is the \emph{strategy set} of player $i$, $X = X^1 \times \dots \times X^N$, $f^i: X \rightarrow \R$ is the \emph{cost function} of player $i$ and $f = (f^1, \dots, f^N)$. We assume that, for all $i \in \{1,\dots,N\}$, $X^i$ is a finite set, and we call each element $x \in X$ a \emph{state} of the game.
A \emph{pure Nash equilibrium} is a state $x = (x^1, \dots , x^N)$ such that for each $i$, $f^i(x^1,\dots,x^i,\dots,x^N) \le f^i(x^1,\dots,\bar{x}^i,\dots,x^N)$ for any $\bar{x}^i \in X^i$.

\paragraph{Congestion games.}
In a \emph{congestion game} there is a finite set of resources $R$ and, for each $i = 1,\dots, N$, a set $X^i \subseteq 2^R$. From now on we assume $|R| = n$ and we identify each $x^i \in X^i$ with its incidence vector in $\{0,1\}^n$.
A nondecreasing delay function $d_j : \{1,\dots,N\} \rightarrow \Z$
is associated with each resource $j \in \{1, \dots, n\}$. For a resource $j \in \{1, \dots, n\}$ and a state $x \in X$, denote by $t_j(x)$ the number of players using $j$ in state $x$.
The cost $f^i(x)$ incurred by player $i$ in state $x \in X$, is computed as the sum of the delays over all the resources selected by $i$, where the delay of resource $j$ is $d_j(t_j(x))$. Each player $i$, given the other players' strategies $x^{-i}$, chooses a strategy $x^i \in X^i$ that minimizes $f^i(x^i, x^{-i})$.
Note that we are not assuming the delays to be nonnegative. In particular, if the delays are negative, by turning the minimization into maximization problems and changing the signs, we obtain positive nonincreasing functions associated to the resources, to be interpreted as profits.
A congestion game is \emph{symmetric} if all $X^i$'s are the same, i.e. all players have the same strategy set.


In a classical paper \cite{Ros73}, Rosenthal proved that any congestion game has a pure Nash equilibrium. The proof is based on the fact that congestion games are a subclass of (exact) \emph{potential games}, and they admit the following \emph{potential function}:
\begin{equation}
\label{e: potential function}
\phi(x) = \sum_{j=1}^n \sum_{i=1}^{t_j(x)} d_j(i).
\end{equation}
Since $\phi$ is an \emph{exact} potential function, for any $i = 1,\dots,N$, $x= (x^i,x^{-i})\in X$, $\bar{x}=(\bar{x}^i,x^{-i}) \in X$, we have that $\phi(x) - \phi(\bar{x}) = f^i(x) - f^i(\bar{x})$. Thus a pure Nash equilibrium is a state $x \in X$ such that, for each player $i$ and $\bar{x}^i \in X^i$, $\phi(x^i, x^{-i}) \le \phi(\bar{x}^i,x^{-i})$. As a consequence, a global optimum of $\min \{\phi(x) : x \in X\}$ is a pure Nash equilibrium of the congestion game.

\paragraph{Totally unimodular congestion games.}
In \emph{totally unimodular (TU) congestion games}, the sets $X^i$ correspond to the vectors $x^i$ that satisfy:
\begin{align}
\label{e: TU congestion game}
A^i x^i \ge b^i \\
x^i \in \{0,1\}^n,\nonumber
\end{align}
where $b^i \in \Z^m$, and $A^i \in \{0,\pm 1\}^{m \times n}$ is a given TU matrix, as defined above and treated in detail in \cite{Sch86}.
A TU congestion game is \emph{symmetric} if $A^i = A$ and $b^i = b$ for every player $i=1,\dots,N$.

The class of TU congestion games widely extends the class of network games studied by Fabrikant et al.~\cite{FabPapTal04}.
Examples of TU matrices that are not incidence matrices of directed graphs include incidence matrices of bipartite graphs, matrices with the consecutive-ones property, and network matrices.
A full characterization of TU matrices is given by Seymour~\cite{Sey80}.
Moreover, Seymour's  algorithm can be used to recognize in polynomial time if a congestion game, where the strategy sets are given via systems of inequalities, is a TU congestion game or not.
\section{Symmetric TU congestion games}
\label{sec: poly}

\paragraph{Nash equilibria.}
We first investigate the computational complexity of finding a pure Nash equilibrium of a symmetric TU congestion game.

\begin{theorem}
\label{TU}
There is a strongly polynomial-time algorithm for finding a pure Nash equilibrium in symmetric TU congestion games.
\end{theorem}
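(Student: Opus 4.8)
The plan is to exploit the fact that, in the symmetric case, every player draws her strategy from the same set $X = P \cap \Z^n$ with $P = \{x \in [0,1]^n : Ax \ge b\} = \conv(X)$, and that the potential function \eqref{e: potential function} depends on a state $x = (x^1,\dots,x^N)$ only through the aggregate usage vector $y = \sum_{i=1}^N x^i$. Indeed $t_j(x) = y_j$, so $\phi(x) = \Phi(y)$ where $\Phi(y) = \sum_{j=1}^n \sum_{k=1}^{y_j} d_j(k)$. Since a global minimizer of $\phi$ over $X^N$ is a pure Nash equilibrium, it suffices to (i) minimize $\Phi$ over all \emph{achievable} aggregate vectors, and (ii) recover $N$ individual strategies that sum to an optimal aggregate. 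The achievable aggregates are exactly the integer points of $N\cdot P$: any sum of $N$ members of $X$ lies in $(N\cdot P)\cap\Z^n$, and conversely every integer point of $N\cdot P$ is such a sum by the integer decomposition property \cite{BauTro78} of the TU polyhedron $P$. This splits the argument into the two phases below.

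For Phase (i), note that $N\cdot P = \{y : Ay \ge Nb,\ 0 \le y \le N\}$ is an integral polytope, since the matrix $\bigl[\begin{smallmatrix} A \\ I\end{smallmatrix}\bigr]$ is TU and $Nb$ is integral. Because each $d_j$ is nondecreasing, $\Phi$ is separable and convex, and I would linearize it by writing $y_j = \sum_{k=1}^N z_{jk}$ with $0 \le z_{jk} \le 1$ and minimizing $\sum_{j=1}^n \sum_{k=1}^N d_j(k)\,z_{jk}$ subject to $A\bigl(\sum_{k=1}^N z_{\cdot k}\bigr) \ge Nb$. Monotonicity of the $d_j$'s guarantees that an optimal solution fills the cheapest levels first, so for every integral $y$ the optimal $z$-value equals $\Phi(y)$; hence the LP computes $\min\{\Phi(y) : y \in (N\cdot P)\cap\Z^n\}$. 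Its constraint matrix is obtained from $A$ by duplicating each column $N$ times and appending the identity, and both operations preserve total unimodularity; together with integral right-hand sides this forces an integral optimum. As the constraint matrix has entries in $\{0,\pm1\}$, Tardos's algorithm solves the LP in time strongly polynomial in $n,m,N$ (independent of the delay values) and returns an integral optimum $y^*$.

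For Phase (ii), the integer decomposition property \cite{BauTro78} guarantees that $y^* = \sum_{i=1}^N \chi^i$ for some $\chi^i \in X$. I would compute such a decomposition by peeling off one strategy at a time: extract $\chi^1 \in X$ with $y^* - \chi^1 \in (N-1)\cdot P$, then recurse on $y^*-\chi^1$. The vector $\chi^1$ must satisfy $b \le A\chi^1 \le Ay^* - (N-1)b$ together with $\max(0,\,y^* - (N-1)\mathbf 1) \le \chi^1 \le \min(\mathbf 1,\,y^*)$, a system whose constraint matrix $\bigl[\begin{smallmatrix}A\\I\end{smallmatrix}\bigr]$ is TU and whose bounds are integral; it is nonempty precisely by the decomposition property, so a basic feasible solution of the corresponding feasibility LP is an integral point, again found in strongly polynomial time. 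After $N$ iterations we obtain strategies $\chi^1,\dots,\chi^N \in X$ with $\sum_i \chi^i = y^*$, hence a state $x$ with $t(x) = y^*$ and $\phi(x) = \Phi(y^*)$ minimal over $X^N$; this $x$ is the desired pure Nash equilibrium.

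The conceptual heart of the argument — and the step I expect to be the main obstacle — is Phase (i): certifying that the monotone separable convex minimization of $\Phi$ over the integer points of $N\cdot P$ can be recast as a single totally unimodular linear program with an integral optimum. Everything hinges on the stability of total unimodularity under column duplication and identity augmentation, and on using monotonicity of the delays to discard the ordering constraints $z_{j1}\ge z_{j2}\ge\cdots$ that would otherwise spoil the clean LP structure. Phase (ii) is then a fairly direct algorithmic realization of the integer decomposition property, and strong polynomiality throughout follows from applying Tardos's algorithm to constraint matrices with bounded entries.
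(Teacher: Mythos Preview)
Your proposal is correct and follows essentially the same two-phase approach as the paper: aggregate the potential into a separable function of the usage vector $z\in (N\cdot P)\cap\Z^n$, linearize via $N$ copies per resource and solve the resulting TU linear program with Tardos's algorithm, then decompose the optimal $z$ into $N$ strategies using the Baum--Trotter integer decomposition property, peeling off one strategy at a time through an auxiliary TU system. The only cosmetic differences are notational (your $z_{jk}$ are the paper's $y^k_j$) and that the paper certifies non-emptiness of the peel-off polytope by exhibiting the explicit point $\bar z/N$ rather than invoking the decomposition property.
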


\begin{proof}

The algorithm computes a global optimum of
\begin{equation}
\label{e: potential problem}
\min \{ \phi(x) : x \in X\},
\end{equation}
where $\phi$ is the potential function of the congestion game and $X = X^1 \times \dots \times X^N$ is the strategy space. Since a global optimum of \eqref{e: potential problem} is also a local optimum of the local search problem, the resulting state is a pure Nash equilibrium.

In the first phase of our algorithm, we set up an aggregated problem as follows. Since the value of the potential function \eqref{e: potential function} only depends on how many players use a resource, we sum up the constraints corresponding to a given row of matrix $A$ for all players, and we define the aggregated variables $z \in [0,N]^n \cap \Z^n$ such that $z_j = \sum_{i=1}^N x_j^i$ for each resource $j$. 

\begin{equation}
\label{e: potential problem aggr}
\min \left\{ \sum_{j=1}^n \sum_{i=1}^{z_j} d_j(i) : A z\ge Nb, \ z \in [0,N]^n \cap \Z^n\right\}.
\end{equation}

In the remainder of the proof, we find an optimum $\bar z$ of \eqref{e: potential problem aggr}, and then we show how to decompose it into a state $\bar x\in X$ such that  $\bar z = \bar x^1 + \dots + \bar x^N$, with $\bar x^i \in X^i$.
Since, for each state $x\in X$, the corresponding $z = x^1 + \dots + x^N$ is feasible for \eqref{e: potential problem aggr} and has the same objective value, we have that $\bar x$ is an optimum of \eqref{e: potential problem}.

Next, to model the objective function of \eqref{e: potential problem aggr} as a linear function, we introduce variables $y^i \in \{0,1\}^n$ for $i=1,\dots,N$, where $y_j^i = 1$ if at least $i$ players use resource $j$, and $y_j^i = 0$ otherwise. Since $z = y^1 + \dots + y^N$ for each resource $j$, we can write our \emph{aggregated problem} as:
\begin{align}
\label{agg}
\text{minimize} \quad & \sum_{j=1}^n \sum_{i=1}^N d_j(i) y_j^i \\
\text{subject to} \quad & \sum_{i=1}^N A y^i \ge Nb \nonumber \\
& 0 \le y^i \le 1 && i = 1,\dots, N \nonumber \\
& y^i \in \{0,1\}^n && i = 1,\dots, N. \nonumber
\end{align}
First, for each $z$ feasible for \eqref{e: potential problem aggr}, we define $y^i_j = 1$ if $i \le z_j$ and $y^i_j = 0$ otherwise. Note that $y$ is feasible for \eqref{agg} and with the same objective value as $z$. Moreover, for each $y$ feasible for \eqref{agg}, the vector $z = y^1 + \dots + y^N$ is feasible for \eqref{e: potential problem aggr} and has objective value not larger than that of $y$. Therefore, the optimal solution $\bar y$ of \eqref{agg} yields an optimal solution $\bar z$ of \eqref{e: potential problem aggr}.

We now show how to solve problem \eqref{agg}.
First, the constraint matrix of the aggregated problem \eqref{agg} is also TU, since it is of the form
\[
(
\underbrace{
\begin{array}{cccc}
A & A & \cdots & A
\end{array}}_{N \text{ times}}
).
\]
As the right hand side of the system is integral, the linear relaxation of the feasible set of the aggregated problem has only integral vertices (see for example Theorem 19.3 in \cite{Sch86}).
Thus we can find an optimal solution $\bar y$ of the aggregated problem via linear programming.
Using Tardos's \cite{Tar86} algorithm, this can be done in time polynomial in $\size(A)$ and in $N$, thus in time polynomial in $n,m,N$ because all entries of $A$ are in $\{0,\pm 1\}$.
The vector $\bar z =  \bar y^1 + \dots \bar y^N$ is then an optimum of \eqref{e: potential problem aggr}.
In the remainder of the proof, we show how to derive from $\bar z$ a state $\bar x$ with the same objective value in \eqref{e: potential problem}. 

Since $A \bar z \ge Nb$, and $0 \le \bar z_j \le N$ for every $j =1,\dots,n$, $\bar z$ is an integral vector in $N \cdot P$, where $P = \{x :  0 \le x \le 1 ,\ A x \ge b\}$.
Since $A$ is TU, the polyhedron $P$ has the \emph{integer decomposition property}, thus there exist integer vectors $\bar{x}^1,\dots,\bar{x}^N$ in $P$ such that $\bar z = \bar{x}^1+\dots+\bar{x}^N$. Following Baum and Trotter \cite{BauTro78}, we show how to obtain such vectors in strongly polynomial time.

We show how to find an integer vector $\bar x^1$ in $P$ such that $\bar z - \bar x^1$ is an integer vector in $(N-1) \cdot P$.
In order to do so, we define
\begin{equation}
\label{e: P1}
P^1 = \{ s : 0 \le s \le 1, \  \bar z - (N-1) \le s \le \bar z, \ b \le As \le A \bar z- (N-1) b \}.
\end{equation}
The polyhedron $P^1$ is nonempty since it contains $\bar z/N$, and is integral since its constraint matrix
is TU.
Again using Tardos's \cite{Tar86} algorithm, we can find a vertex $\bar x^1$ of $P^1$ in time polynomial in $n,m$.

By applying the above argument recursively $N$ times, we obtain integral vectors $\bar x^1, \bar x^2, \dots,\bar x^N$ in $P$ with $\bar z = \bar x^1 + \dots + \bar x^N$.
Therefore, $\bar x$ is a pure Nash equilibrium.
The total running time of the algorithm is polynomial in $n,m,N$.
\end{proof}

\paragraph{Socially optimal states.}
A related problem is to consider the complexity of computing a socially optimal state.
Define the \emph{social delay} of a state $x$ as 
\begin{equation}
\label{e: social delay}
\gamma(x) := \sum_{j=1}^n t_j(x) \cdot d_j(t_j(x)).
\end{equation}
A \emph{socially optimal state} is a state $x$ minimizing the social delay $\gamma(x)$.
We remark that, since we are dealing with congestion games, we are assuming that the delay functions $d_j$ associated with each resource $j \in \{1, \dots, n\}$ are nondecreasing.
A delay function $d_j$ is called \emph{weakly convex} if 
$$
i \cdot d_j(i) - (i - 1) \cdot d_j(i - 1) \le (i +1) \cdot d_j(i +1) - i \cdot d_j(i)
$$
for all $1 < i < N$. 
Weakly convex delay functions have been already studied in the context of socially optimal states.
Werneck et al.~\cite{WerSet00} consider spanning tree congestion games with such delay functions and show how to find efficiently a socially optimal solution.
Ackermann et al.~\cite{AckRogVoc08} extend their algorithm to matroid congestion games with weakly convex delay functions.
(See Section~\ref{sec: matroid} for more details on matroid congestion games).
For symmetric TU congestion games, under the assumption of weakly convex delay functions, the structure of the problem allows us to use techniques similar to the ones introduced in Theorem \ref{TU}. 

\begin{theorem}
\label{th: sym soc opt}
There exists a strongly polynomial time algorithm for the problem of computing a socially optimal state of a symmetric TU congestion game with weakly convex delay functions.
\end{theorem}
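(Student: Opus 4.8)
The plan is to follow the two-phase strategy of Theorem~\ref{TU}, replacing the potential $\phi$ by the social delay $\gamma$ and checking that weak convexity is exactly the property that keeps the aggregated problem amenable to the same linearization. For each resource $j$, write $g_j(t) := t\cdot d_j(t)$, so that $\gamma(x) = \sum_{j=1}^n g_j(t_j(x))$ depends on a state $x$ only through the congestion vector $z$ with $z_j = t_j(x)$. Setting $g_j(0):=0$ and introducing the discrete increments $\Delta_j(i) := g_j(i)-g_j(i-1) = i\, d_j(i) - (i-1)\, d_j(i-1)$, one has $g_j(z_j) = \sum_{i=1}^{z_j}\Delta_j(i)$. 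The key observation is that the weak convexity inequality is precisely $\Delta_j(i) \le \Delta_j(i+1)$ for $1 < i < N$, while nondecreasingness of $d_j$ supplies the base case $\Delta_j(1) = d_j(1) \le 2 d_j(2) - d_j(1) = \Delta_j(2)$; hence the increments satisfy $\Delta_j(1) \le \Delta_j(2) \le \dots \le \Delta_j(N)$.

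With this in hand, I would set up the aggregated problem
\begin{equation*}
\min\left\{ \sum_{j=1}^n \sum_{i=1}^{z_j}\Delta_j(i) : A z \ge Nb,\ z \in [0,N]^n \cap \Z^n \right\},
\end{equation*}
which is equivalent to minimizing $\gamma$ over $X$: every state gives a feasible $z = x^1 + \dots + x^N$ with $\gamma(x) = \sum_j g_j(z_j)$, and (by the decomposition below) every feasible $z$ arises from some state. Exactly as in Theorem~\ref{TU}, I would linearize it by introducing $y^i \in \{0,1\}^n$ with $y^i_j = 1$ iff at least $i$ players use $j$, obtaining
\begin{align*}
\text{minimize} \quad & \sum_{j=1}^n \sum_{i=1}^N \Delta_j(i)\, y_j^i \\
\text{subject to} \quad & \sum_{i=1}^N A y^i \ge Nb \\
& 0 \le y^i \le 1, \ \ y^i \in \{0,1\}^n, \quad i = 1,\dots,N.
\end{align*}
Because the increments are nondecreasing, for a fixed column sum $z_j = \sum_i y^i_j$ the quantity $\sum_{i=1}^N \Delta_j(i)\, y^i_j$ is minimized by the ``filled'' assignment $y^i_j = 1 \Leftrightarrow i \le z_j$, which attains $g_j(z_j)$. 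This yields the two matching inequalities: a feasible $z$ maps to a filled $y$ of equal objective, and any feasible $y$ maps to $z = \sum_i y^i$ of no larger objective; hence an optimal $\bar y$ produces an optimal $\bar z = \sum_i \bar y^i$ of the aggregated problem.

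The constraint matrix is the same $N$-fold block matrix $(\,A\ \cdots\ A\,)$ as in Theorem~\ref{TU}, so it is TU, and the integral right-hand side makes every vertex of the relaxation integral; Tardos's algorithm then finds an integral optimal $\bar y$ in time polynomial in $n,m,N$. The second phase is verbatim that of Theorem~\ref{TU}: since $A\bar z \ge Nb$ and $0 \le \bar z \le N$, the vector $\bar z$ is an integer point of $N\cdot P$ with $P = \{x : 0 \le x \le 1,\ Ax \ge b\}$, and the integer decomposition property of the TU polyhedron $P$, realized by the recursive Baum--Trotter construction \eqref{e: P1}, decomposes $\bar z$ into $\bar x^1, \dots, \bar x^N \in X$ with $\bar z = \sum_i \bar x^i$ in strongly polynomial time; the resulting state $\bar x$ is socially optimal.

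I expect the only genuine obstacle to be the identity in the first paragraph: recognizing that $\gamma$ aggregates to a separable objective whose per-resource increments $\Delta_j(i) = i\,d_j(i) - (i-1)\,d_j(i-1)$ are exactly the quantities governed by weak convexity, so that the ``fill from the bottom'' linearization of Theorem~\ref{TU} applies unchanged. Once monotonicity of the increments is secured, both the LP integrality argument and the decomposition step transfer without modification.
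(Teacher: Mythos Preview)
Your proposal is correct and follows essentially the same approach as the paper: define the increments $\Delta_j(i)=i\,d_j(i)-(i-1)\,d_j(i-1)$, observe that weak convexity makes them nondecreasing, and then rerun the aggregation/linearization/decomposition machinery of Theorem~\ref{TU} with $\Delta_j$ in place of $d_j$. The paper packages this as ``build a new TU congestion game with delays $d'_j:=\Delta_j$ and invoke Theorem~\ref{TU},'' whereas you unfold the argument inline; you are also slightly more careful in that you explicitly verify the base case $\Delta_j(1)\le\Delta_j(2)$ from monotonicity of $d_j$, which the paper glosses over.
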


\begin{proof}
Let $(X,d)$ be a game with $N$ players.
By artificially setting $d_j(0)$ to be any scalar, we can rewrite the social delay of a given state $x$ as
\begin{align*}
\gamma(x) = \sum_{j=1}^n t_j(x) \cdot d_j(t_j(x)) 
= \sum_{j=1}^n \sum_{i=1}^{t_j(x)} \big(i \cdot d_j(i) - (i-1) \cdot d_j(i-1)\big).
\end{align*}
For each resource $j \in \{1, \dots, n\}$,
we define functions $d'_j : \{1,\dots,N\} \rightarrow \Z$ by 
$$d'_j(i) := i \cdot d_j(i) - (i-1) \cdot d_j(i-1).$$
The functions $d'_j$ are nondecreasing since the $d_j$ are weakly convex, thus we can consider them as new delay functions.
Consider now a new TU congestion game obtained from the original one by using the new delay functions $d'_j$.
By the proof of Theorem~\ref{TU}, we can find in strongly polynomial time a global minimum of 
$\min \{ \phi'(x) : x \in X\}$,
where $\phi'(x)$ is the potential function of the latter game.
This global minimum is a socially optimal state of the original game since
\begin{equation*}
\phi'(x) = \sum_{j=1}^n \sum_{i=1}^{t_j(x)} d'_j(i) = \gamma(x). \qedhere
\end{equation*}
\end{proof}
We remark that the weak convexity assumption in Theorem~\ref{th: sym soc opt} is necessary, since 
for general nondecreasing delay functions
computing a socially optimal state of a symmetric TU congestion game is NP-hard \cite{MeySch12}.

\section{Matroid and polymatroid congestion games}
\label{sec: matroid}
In this section we point out that the aggregation/decomposition algorithm presented in Section \ref{sec: poly} can be adapted to \emph{independent set matroid congestion games} and \emph{base matroid congestion games},
where the strategies of each player are, respectively, the independent sets and the bases of a matroid. We further extend our approach to \emph{polymatroid congestion games}. We remark that the results presented in this section hold for both the symmetric and the asymmetric case.

Formally, let $M_i = (R,\I_i)$, $i=1,\dots,N$, be an indexed family of matroids, and let $r_i$ denote the rank function of $M_i$. The matroids are all defined on the same ground set $R$, that is the set of congestible resources, and $|R|=n$.
In independent set matroid congestion games, $X^i$ consists of the incidence vectors of the independent sets  of $M_i$, for all $i = 1,\dots,N$, thus $P^i$ is the independent set polytope of $M_i$, defined as
\begin{equation}
\label{e: matroid}
P^i = \{x^i \in \R^n_+ : x^i(U) \le r_i(U) \; \forall \; U \subseteq R\},
\end{equation}
see \cite{Sch03}.
In base matroid congestion games we restrict to bases, thus $P^i$ contains the additional constraint $x^i(R) = r_i(R)$. 

A polytope in the form \eqref{e: matroid}, where the right-hand-side is given by a submodular, nondecreasing, normalized function $g_i: 2^R \rightarrow \Z_+$, is an integer polymatroid. We assume that $g_i$ is given through a value oracle. \emph{Polymatroid congestion games} are a generalization of independent set matroid congestion games where the strategies of each player are the integer vectors in an integer polymatroid, see \cite{HarKliPei14}. In \emph{base polymatroid congestion games}, for each $i=1,\dots, N$, we restrict to vectors such that $x^i(R) = g_i(R)$. Note that $x^i_j$ is now a nonnegative integer expressing the usage of resource $j$ by player $i$.
Thus, the cost of player $i$ in state $x$ is $\sum_{j=1}^n x^i_j d_j(t_j(x))$, where $t_j(x)$ is the total usage of resource $j$ in state $x$.
We refer to \cite{Sch03} for definitions and notions about matroids and polymatroids.

We first define, for any congestion game, a variant of the game where each player keeps only her maximum cardinality strategies. The next proposition will be useful to reduce base matroid and polymatroid congestion games to independent set matroid and polymatroid congestion games, respectively. See the Appendix for a proof.

\begin{proposition}
\label{reduction}
Let $C$ be a congestion game with $N$ players, resource set $R$, and strategy set $X^i$ for each player $i$.
Let $\bar C$ be another congestion game on $R$ with $N$ players, where the strategy set $\bar X^i$ of player $i$ consists only of the maximum cardinality subsets of $R$ in $X^i$.
Given an instance $I$ of $\bar C$, one can construct in strongly polynomial time an instance $h(I)$ of $C$ such that 
(i) a pure Nash equilibrium for $h(I)$ is also a pure Nash equilibrium for $I$, and
(ii) a socially optimal state for $h(I)$ is also a socially optimal state for $I$.
Moreover, the reduction only modifies the edge delays by a constant.
\end{proposition}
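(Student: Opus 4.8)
The plan is to realize $h$ by shifting every delay function down by a single large integer constant $K$. Subtracting the same $K$ from all values keeps each delay integer-valued and nondecreasing, and it penalizes every player---and the social objective---for using few resources, so that for $K$ large enough every best response and every socially optimal state of $h(I)$ is forced to use strategies of maximum cardinality. Concretely, I would set $D := \max_{j,i} |d_j(i)|$ and $K := 2nND + 1$, and let $h(I)$ be the instance of $C$ with delay functions $\tilde d_j(i) := d_j(i) - K$. Then $K$ has polynomial bit-length, and computing all $\tilde d_j$ only requires a scan over the $nN$ delay values followed by $nN$ subtractions, so $h$ runs in strongly polynomial time and modifies the delays by a constant, as claimed.

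The heart of the argument is a cardinality-forcing estimate. Writing $\tilde f^i$ for the cost of player $i$ under $h(I)$ and recalling that $x^i \in \{0,1\}^n$, one has $\tilde f^i(x) = f^i(x) - K\,|x^i|$, where $|x^i|$ denotes the number of resources used by $i$. Let $c_i := \max_{x^i \in X^i}|x^i|$, so that $\bar X^i$ is exactly the set of strategies in $X^i$ of cardinality $c_i$. Since $|f^i| \le nD$ in every state, for any $x^i$ with $|x^i| < c_i$ the gain $K(c_i - |x^i|) \ge K$ obtained by switching to a maximum-cardinality strategy dominates the change in $f^i$; hence no strategy outside $\bar X^i$ is ever a best response in $h(I)$. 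Consequently, in any pure Nash equilibrium $\bar x$ of $h(I)$ each component $\bar x^i$ lies in $\bar X^i$. The same estimate applied to the social delay, using $\tilde\gamma(x) = \gamma(x) - K\sum_j t_j(x) = \gamma(x) - K\sum_i |x^i|$ together with $|\gamma| \le nND$, shows that any state failing to maximize $\sum_i |x^i|$ is strictly worse under $\tilde\gamma$; since the strategy space is the product $X^1\times\cdots\times X^N$, this maximum equals $\sum_i c_i$ and is attained exactly when every $|x^i| = c_i$. Thus every socially optimal state of $h(I)$ also has each component in $\bar X^i$.

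It then remains to transfer optimality from $h(I)$ back to $I$, using that the shift is constant on $\bar X^i$. For part (i), if $\bar x$ is a pure Nash equilibrium of $h(I)$, then each $\bar x^i \in \bar X^i$, and for every deviation $\tilde x^i \in \bar X^i$ we have $|\tilde x^i| = |\bar x^i| = c_i$, so the $K$-terms cancel and
\[
f^i(\tilde x^i, \bar x^{-i}) - f^i(\bar x^i, \bar x^{-i}) = \tilde f^i(\tilde x^i, \bar x^{-i}) - \tilde f^i(\bar x^i, \bar x^{-i}) \ge 0,
\]
which is precisely the equilibrium condition for $I$, whose strategy sets are the $\bar X^i$. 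For part (ii), a socially optimal state $\bar x$ of $h(I)$ has every component in $\bar X^i$, and on all of $\bar X := \bar X^1 \times \cdots \times \bar X^N$ the quantity $\sum_i |x^i|$ equals the constant $\sum_i c_i$; hence $\tilde\gamma$ and $\gamma$ differ on $\bar X$ by a fixed additive constant, and minimality of $\tilde\gamma$ over $X \supseteq \bar X$ gives minimality of $\gamma$ over $\bar X$, i.e.\ social optimality for $I$.

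I expect the only delicate point to be the bookkeeping behind $K$: it must exceed the largest possible variation of the original cost and of the social delay (hence the factor $nN$) while keeping polynomial bit-length so that strong polynomiality is preserved, and one must check that the cancellation of the $K$-term is exact---which relies on all strategies in $\bar X^i$ sharing the identical cardinality $c_i$.
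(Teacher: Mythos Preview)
Your proposal is correct and follows essentially the same approach as the paper: subtract a large constant from every delay so that maximum-cardinality strategies strictly dominate, then observe that on $\bar X$ the shift cancels. Your choice of a single constant $K = 2nND+1$ that works simultaneously for (i) and (ii) is in fact slightly cleaner than the paper's presentation, which uses $2|R|\Delta+1$ for the equilibrium part and $2N|R|\Delta+1$ for the social-optimum part.
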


It is known that in base matroid congestion games players reach a Nash equilibrium after a polynomial number of best responses, see Theorem 2.5 in \cite{AckRogVoc08}. This immediately implies the next theorem, in the case of base matroid congestion games. 

\begin{theorem}
\label{t: matroid Nash}
There is a strongly polynomial-time algorithm for finding a pure Nash equilibrium in independent set and base matroid congestion games.
\end{theorem}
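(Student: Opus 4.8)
The plan is to treat the two cases separately and to reduce the independent set case to the base case, which is already settled. For \textbf{base matroid congestion games}, a pure Nash equilibrium is reached after a polynomial number of best responses by Theorem~2.5 of \cite{AckRogVoc08}, and since each best response amounts to computing a minimum-cost basis (a single run of the matroid greedy algorithm with the current delays as weights), the whole procedure runs in strongly polynomial time. This holds in both the symmetric and the asymmetric case, as the result of \cite{AckRogVoc08} does not assume a common strategy space. So the only remaining work is the independent set case.

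For \textbf{independent set matroid congestion games}, I would reduce each instance to a base matroid congestion game. Given the matroids $M_i=(R,\I_i)$ of rank $r_i$, I introduce for each player $i$ a \emph{private} set $D_i$ of $r_i$ new dummy resources, all carrying the identically-zero delay function, and I let $M_i'$ be the rank-$r_i$ truncation of the direct sum of $M_i$ with the free matroid on $D_i$ (the other dummy sets $D_{i'}$, $i'\neq i$, being loops of $M_i'$). The bases of $M_i'$ are then exactly the sets $I\cup J$ with $I\in\I_i$, $J\subseteq D_i$, and $|I|+|J|=r_i$; since $|D_i|=r_i$, every independent set $I$ of $M_i$ extends to such a basis. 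Because the dummy resources carry zero delay and are private to player $i$, they never contribute to any cost and never get congested, so the cost of a basis $I\cup J$ equals the cost of $I$ in the original game.

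It then remains to transport equilibria. A state of the base game on the $M_i'$ projects, by dropping dummy resources, to a state of the independent set game with identical player costs, and conversely every state of the independent set game lifts to a state of the base game with the same costs. Hence a player's best response in the base game is precisely a minimum-cost independent set in the original game, and a pure Nash equilibrium of the base game projects to a pure Nash equilibrium of the independent set game. Solving the base game on the $M_i'$ in strongly polynomial time by the first part then yields a pure Nash equilibrium of the independent set game in strongly polynomial time, which completes the proof.

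The main obstacle I expect is the faithfulness of the padding construction: one must confirm that the truncated direct sum $M_i'$ is a matroid of rank $r_i$ with exactly the claimed bases, that privatizing the dummies keeps the ground set of polynomial size while avoiding any spurious congestion, and that the best-response and equilibrium correspondence is exact. An alternative route, closer to the paper's polyhedral technique, would mirror the two-phase algorithm of Theorem~\ref{TU} directly: aggregate to $z\in N\cdot P\cap\Z^n$ with $P$ the independent set polytope \eqref{e: matroid}, minimize the separable convex potential through the layered variables $y^i$, and decompose $\bar z$ into $N$ independent sets using the integer decomposition property of the matroid polytope (equivalently, the matroid union theorem; \cite{BauTro78,Sch03}). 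There the delicate point is instead solving the aggregated problem in strongly polynomial time, since it no longer reduces to a linear program over a TU matrix.
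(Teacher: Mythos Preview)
Your proof is correct, but the direction of reduction is the reverse of the paper's. The paper settles the \emph{independent set} case first, directly via the aggregation/decomposition technique: the aggregated feasible region $Q$ is an integer polymatroid (since $\sum_i r_i$ is submodular, nondecreasing and normalized), so the separable weakly convex potential is minimized over $Q$ by the greedy algorithm in strongly polynomial time \cite{FedGro86}, and the resulting integral optimum $\bar z$ is split into $N$ independent sets $\bar x^1,\dots,\bar x^N\in P^i\cap\Z^n$ using Theorem~44.6 and Corollary~46.2c in \cite{Sch03}. The \emph{base} case is then obtained from the independent set case through Proposition~\ref{reduction}, which forces maximum-cardinality strategies by shifting all delays by a large negative constant.

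You instead take the base case straight from \cite{AckRogVoc08} and pad each matroid with private zero-delay dummies to reduce independent sets to bases. This is a genuinely different and valid route. The payoff of the paper's approach is that it delivers a \emph{global} minimizer of the potential function, not merely a local one; this stronger conclusion is exactly what is reused, with $\gamma$ in place of $\phi$, to prove Theorem~\ref{t: polymatroid social} on socially optimal states. Your best-response argument yields only a local minimum, which suffices for Theorem~\ref{t: matroid Nash} but would not support the subsequent corollaries. On the other hand, your reduction is more elementary and sidesteps the ``delicate point'' you flag in your last paragraph: the paper resolves it precisely by recognizing the aggregated problem as separable convex minimization over a polymatroid, for which greedy is strongly polynomial.
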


In the Appendix we give an alternative proof of Theorem \ref{t: matroid Nash} where we apply a variant of our aggregation/decomposition algorithm to find a global optimum of \eqref{e: potential problem}.
This algorithm can be extended to find a socially optimal state of a (base) polymatroid congestion game, if the delay functions are weakly convex.

\begin{theorem}
\label{t: polymatroid social}
There is a strongly polynomial time algorithm for the problem of computing a socially optimal state of a polymatroid or base polymatroid congestion game with weakly convex delay functions.
\end{theorem}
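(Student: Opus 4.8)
The plan is to merge the delay-transformation device from the proof of Theorem~\ref{th: sym soc opt} with the aggregation/decomposition algorithm for polymatroid congestion games underlying Theorem~\ref{t: matroid Nash}. As in Theorem~\ref{th: sym soc opt}, for a polymatroid congestion game with weakly convex delays I would fix $d_j(0)$ arbitrarily and set $d'_j(i) := i \cdot d_j(i) - (i-1) \cdot d_j(i-1)$. Weak convexity of the $d_j$ makes each $d'_j$ nondecreasing, hence a legitimate delay function, and the same telescoping identity gives, for every state $x$ with total usage $t_j(x) = \sum_{i=1}^N x^i_j$, the equality $\gamma(x) = \sum_{j=1}^n \sum_{i=1}^{t_j(x)} d'_j(i) = \phi'(x)$, where $\phi'$ is the potential of the polymatroid congestion game with delays $d'_j$. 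Thus a socially optimal state is precisely a global minimizer of $\phi'$, and it suffices to compute such a minimizer in strongly polynomial time.

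For base polymatroid games I would first apply Proposition~\ref{reduction}, which turns the base game $I$ into an independent-set polymatroid game $h(I)$ whose delays differ from those of $I$ only by additive constants and whose socially optimal states are socially optimal for $I$. Since adding a constant to a delay function leaves the weak-convexity inequalities unchanged, $h(I)$ again has weakly convex delays; hence this step reduces the base case to the independent-set case while remaining compatible with the transformation above.

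To minimize $\phi'$ in the independent-set case I would aggregate as in Theorem~\ref{TU}. Writing $z = \sum_{i=1}^N x^i$, the objective becomes the separable function $\sum_{j=1}^n \sum_{i=1}^{z_j} d'_j(i)$, whose one-dimensional summands are discretely convex because the $d'_j$ are nondecreasing. The feasible aggregated vectors are exactly the integer points of the Minkowski sum $P^1 + \dots + P^N$, which is itself an integer polymatroid with rank function $\sum_{i=1}^N g_i$ and a value oracle obtained by summing the oracles of the $g_i$. Minimizing a separable discrete-convex function over the integer points of a polymatroid can be done in strongly polynomial time from this oracle, yielding an integral optimal aggregated vector $\bar z$. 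It then remains to split $\bar z$ into integer vectors $\bar x^i \in P^i$ with $\bar z = \sum_{i=1}^N \bar x^i$, which is possible by the integer decomposition property of sums of integer polymatroids, the polymatroid analogue of the Baum--Trotter argument used in Theorem~\ref{TU}. Peeling off one integer vector per iteration, this produces in $N$ rounds a state $\bar x$ with $\phi'(\bar x) = \gamma(\bar x)$, which is therefore socially optimal.

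The main obstacle is to carry out both the aggregated minimization and the decomposition in strongly polynomial time using only value oracles for the $g_i$, in place of the explicit integral polyhedron and the linear-programming vertex computation available in the TU setting. For the decomposition one must show, exploiting the submodular structure, that an integer point of $P^1 + \dots + P^N$ splits as an integer point of $P^N$ plus an integer point of $P^1 + \dots + P^{N-1}$, and that such a splitting is computable in strongly polynomial time; iterating $N$ times yields the full decomposition. Verifying that the separable-convex aggregated minimization is likewise strongly polynomial in the oracle model, and produces an integral $\bar z$ lying in the sum polymatroid and hence decomposable, is the other delicate point.
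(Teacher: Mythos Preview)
Your proposal is correct and follows essentially the paper's route: aggregate to a separable weakly convex minimization over the integer polymatroid with rank function $\sum_i g_i$ (the paper works directly with $\gamma_j(z_j)=z_j d_j(z_j)$ rather than passing through the auxiliary delays $d'_j$, but the resulting aggregated objective and algorithm are identical), solve it by the greedy algorithm in strongly polynomial time, and decompose the integral optimum using Schrijver's Theorem~44.6 and Corollary~46.2c. One small caveat: Proposition~\ref{reduction} is stated for congestion games whose strategies are \emph{subsets} of $R$, whereas polymatroid strategies are nonnegative integer vectors, so for the base polymatroid case the paper explicitly says Proposition~\ref{reduction} must be adapted to multisets; you should flag this adaptation rather than invoke the proposition verbatim.
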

\begin{proof}
We prove the result for polymatroid congestion games. The statement on base polymatroid congestion games then follows by adapting Proposition \ref{reduction} to multisets.
The algorithm computes the global minimum of 
$\min \{ \gamma(x) : x \in X\}$, where $\gamma(x)$ is defined as in \eqref{e: social delay}.
The aggregated problem is
\begin{equation}
\label{e: social problem aggr polymatroid}
\min \left\{ \sum_{j=1}^n \gamma_j(z_j) : z(U) \le \sum_{i=1}^N g_i(U) \;\; \forall \; U \subseteq R, z \in \Z^n_+\right\},
\end{equation}
where, for all $j=1, \dots, n$, $\gamma_j(z_j) = z_j d_j(z_j)$ and $z_j = \sum_{i=1}^N x^i_j$.
Clearly, the polyhedron $Q$ obtained by replacing $z \in \Z^n_+$ with $z \in \R^n_+$ in the constraints of \eqref{e: social problem aggr polymatroid} defines an integer polymatroid, since $\sum_{i=1}^N g_i$ is an integer-valued set function that is submodular, nondecreasing and normalized.
Moreover, by assumption, $\gamma_j(z_j)$ is weakly convex in $z_j$. 
Since one can minimize a separable weakly convex function over a polymatroid with the greedy algorithm \cite{FedGro86}, we can solve the relaxation of \eqref{e: social problem aggr polymatroid} in strongly polynomial time. Moreover, as the right-hand-sides of the constraints in \eqref{e: potential problem aggr matroid} are integral, the optimal solution $\bar z$ returned by the greedy algorithm is integral.

By Theorem 44.6 and Corollary 46.2c in \cite{Sch03}, if $\bar z \in Q \cap \Z^n$, then $\bar z =  \bar x^1 + \cdots +  \bar x^N$, where $\bar x^i$ is in $P^i \cap \Z^n$ for $i=1,\dots,N$.
Moreover, the proof of Corollary 46.2c in \cite{Sch03} provides a strongly polynomial-time algorithm to find
the $N$ incidence vectors $\bar x^1, \dots, \bar x^N$. In each step, one finds an integer vector $x^i \in P^i$ such that $z - \sum_{k=1}^i x^k$ is an integer vector in $P^{i+1} + \cdots + P^N$.
Clearly $\bar x$ is a socially optimal state.
\end{proof}

A corollary of Theorem \ref{t: polymatroid social} is the existence of a strongly polynomial time algorithm to find a socially optimal state of a matroid congestion game with weakly convex delay functions. This result is stated for base matroid congestion games in Theorem 2.8 of \cite{AckRogVoc08}, as a generalization of Theorem 1 in \cite{WerSet00}. Since there is a reduction from integer polymatroids to matroids \cite{Hel74}, see also Section 44.6b in \cite{Sch03}, one could first reduce a polymatroid congestion game to a matroid congestion game, and then apply Theorem 2.8 in \cite{AckRogVoc08}. However, since for each $i$, we have to replicate any element $r$ of the ground set $g_i(r)$ times, the algorithm obtained through the reduction would be exponential in the size of $g_i(r)$.

\section{Combinatorial TU congestion games} 
\label{comb}

Next, we define five combinatorial congestion games that we will consider in the remainder of the paper.
(See~\cite{Sch03} for more details on the corresponding combinatorial problems.)
\begin{itemize}[noitemsep,nolistsep,leftmargin=1.1em]
\item
\emph{Network congestion games} (\N).
We are given a digraph $D=(V, E)$, with the arcs playing the role of the resources.  
For each player $i$, we are given two nodes $r^i, s^i \in V$, and 
the strategy set $X^i$ of player $i$ is the set of all directed paths in $D$ from $r^i$ to $s^i$.
\item
\emph{Matching congestion games} (\M) (resp.~\emph{edge cover congestion games} (\EC)).
We are given a graph $G=(V, E)$, with the edges playing the role of the resources.
For each player $i$, we are given a subgraph $G^i=(V^i,E^i)$ of $G$, and
the strategy set $X^i$ of player $i$ is the set of all matchings (resp.~edge covers) in $G^i$.
\item
\emph{Stable set congestion games} (\SS) (resp.~\emph{vertex cover congestion games} (\VC)).
We are given a graph $G=(V, E)$, with the nodes playing the role of the resources.
For each player $i$, we are given a subgraph $G^i=(V^i,E^i)$ of $G$, and 
the strategy set $X^i$ of player $i$ is the set of all stable sets (resp.~vertex covers) in $G^i$.
\end{itemize}


\begin{proposition}
\label{p: tu comb}
Congestion games \N, and congestion games \M, \EC, \SS, \VC\ on bipartite graphs are TU congestion games.
\end{proposition}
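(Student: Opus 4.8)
The plan is, for each of the five games, to exhibit an $m\times n$ TU matrix $A^i$ and an integral vector $b^i$ such that the polytope $\{x^i\in[0,1]^n : A^i x^i \ge b^i\}$ coincides with $\conv(X^i)$. Two classical facts drive everything: the node--arc incidence matrix of a digraph is TU, and the vertex--edge incidence matrix $B$ of a \emph{bipartite} graph is TU (see \cite{Sch86,Sch03}); moreover, if $A$ is TU and $b$ integral, then $\{x : 0\le x\le 1,\ Ax\ge b\}$ is an integral polytope, by the same integrality result invoked in the proof of Theorem~\ref{TU} (Theorem 19.3 in \cite{Sch86}). I would first record the bookkeeping operations that preserve total unimodularity, since each game needs them: rewriting an equality $Mx=c$ as the pair $Mx\ge c$ and $-Mx\ge -c$ (stacking $M$ with $-M$), turning a $\le$ constraint into a $\ge$ constraint by negating its row, appending the box constraints $x\ge 0$ and $-x\ge -1$ (identity rows), and---to handle the subgraph $G^i$ in the asymmetric case---appending rows $-x_e\ge 0$ that force $x_e=0$ for every resource absent from $G^i$, together with keeping only the constraints indexed by $V^i$ or $E^i$. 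All of these merely duplicate, negate, or adjoin unit rows, and deleting rows/columns preserves TU-ness, so each resulting $A^i$ stays TU and each $b^i$ integral.

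For \N\ the resources are the arcs and $A^i$ is built from the node--arc incidence matrix $M$ of $D$: a strategy of player $i$ is encoded by the flow-conservation system $Mx = e_{s^i}-e_{r^i}$ together with $0\le x\le 1$. By the integrality fact this polytope has only integral vertices, and I would identify its $0/1$ points with the $(r^i,s^i)$-paths. The delicate point, and what I expect to be the main obstacle, is that a $0/1$ flow of value one may be an $(r^i,s^i)$-path together with vertex-disjoint directed cycles, so the flow polytope can strictly contain $\conv(X^i)$ when $X^i$ is taken to be the set of simple paths. I must therefore argue that these spurious points do not matter---for instance by restricting to acyclic instances, where $0/1$ flows of value one are exactly the simple paths, or by observing that replacing $X^i$ with the set of all $0/1$ flows of value one yields an equivalent congestion game, so the flow-polytope description is legitimate. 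The remaining four cases avoid this issue entirely, since their polytopes are bounded and their $0/1$ points coincide literally with the combinatorial objects.

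For \M\ and \EC\ the resources are the edges and $A^i$ is built from the vertex--edge incidence matrix $B$ of the bipartite graph $G$, which is TU. A matching is a $0/1$ vector with $Bx\le 1$ and an edge cover one with $Bx\ge 1$; in either case the integrality fact shows that $\{x\in[0,1]^{E}: Bx\le 1\}$ (resp.\ $Bx\ge 1$) is integral, and its $0/1$ points are exactly the matchings (resp.\ edge covers), so the polytope equals $\conv(X^i)$. For \SS\ and \VC\ the resources are the nodes and I would use the transpose $B^{\top}$ (TU because $B$ is), writing one constraint $x_u+x_v\le 1$ per edge for stable sets and $x_u+x_v\ge 1$ per edge for vertex covers; integrality again identifies the $0/1$ points with the stable sets (resp.\ vertex covers), so $P^i=\conv(X^i)$. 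In every case, after the $\ge$-conversion and the subgraph-forcing rows described above, $A^i$ is TU and $b^i$ is integral, which is precisely what the definition of a TU congestion game requires.
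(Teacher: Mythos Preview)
Your approach is essentially the paper's: exhibit the node--arc incidence matrix for \N\ and the vertex--edge incidence matrix (or its transpose) for \M, \EC, \SS, \VC, and observe that these are TU. The paper's proof is in fact shorter than yours, because the definition of a TU congestion game in \eqref{e: TU congestion game} only requires $X^i = \{x^i\in\{0,1\}^n : A^i x^i \ge b^i\}$ with $A^i$ TU and $b^i$ integral; it does \emph{not} require you to check $P^i=\conv(X^i)$. So all your work invoking Theorem~19.3 of \cite{Sch86} to establish integrality of the relaxation is unnecessary here---you only need to verify that the $0/1$ solutions of the system are exactly the incidence vectors of the combinatorial objects, which is immediate from the definitions.

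On the other hand, you are more careful than the paper in two places. First, you correctly handle the subgraph $G^i$ by forcing $x_e=0$ on absent resources so that all players share the ambient coordinate space $\{0,1\}^n$; the paper simply writes $x^i\in\{0,1\}^{E^i}$ and leaves this embedding implicit. Second, the cycle issue you flag for \N\ is real: a $0/1$ vector satisfying $Mx=e_{s^i}-e_{r^i}$ may decompose as a path plus disjoint dicycles, so the system characterizes unit flows rather than simple paths. The paper's proof asserts the ``if and only if'' without comment, so your observation is a genuine refinement; your proposed resolutions (restrict to acyclic $D$, or enlarge $X^i$ to all $0/1$ unit flows) are both standard and adequate.
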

The proof of Proposition~\ref{p: tu comb} is given in the Appendix.
Recall that a TU congestion game is symmetric if $A^i = A$ and $b^i = b$ for every player $i=1,\dots,N$.
As a consequence, games \N\ are symmetric if all players have the same origin $r^i = r$ and destination $s^i = s$.
Games \M, \EC, \SS, \VC\ are symmetric if all players act on the same subgraph, i.e. $G^i = G$ for all $i=1,\dots,N$.
Theorem \ref{TU} then directly implies the following:
\begin{corollary}
\label{combgames}
There is a strongly polynomial-time algorithm for finding a pure Nash equilibrium in the symmetric case of games \N, and of games \M, \EC, \SS, \VC\ on bipartite graphs.
\end{corollary}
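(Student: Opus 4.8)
The plan is to derive the result directly from Theorem~\ref{TU} and Proposition~\ref{p: tu comb}, so the only real work is bookkeeping about symmetry and about encoding sizes.

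First I would invoke Proposition~\ref{p: tu comb}: each of the games \N, and \M, \EC, \SS, \VC\ on bipartite graphs, is a TU congestion game, meaning that the strategy set $X^i$ of every player is described by a system $A^i x^i \ge b^i$, $x^i \in \{0,1\}^n$, where $A^i$ is a TU matrix and $b^i$ is integral. I then need to check that in each symmetric case the data $(A^i, b^i)$ are in fact common to all players. For \N\ with a shared origin--destination pair $r^i = r$, $s^i = s$, the relevant matrix is the node--arc incidence matrix of the fixed digraph $D$ and $b^i$ encodes the same unit $(r,s)$-flow for every player, so $A^i = A$ and $b^i = b$. For \M, \EC, \SS, \VC\ with $G^i = G$ for all $i$, the matrix (the edge--node incidence matrix of the fixed bipartite graph $G$, up to sign) and the right-hand side are again identical across players. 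Hence in every listed symmetric case we have a \emph{symmetric} TU congestion game, and Theorem~\ref{TU} applies.

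It remains to confirm that ``strongly polynomial in $n, m, N$'' from Theorem~\ref{TU} translates into ``strongly polynomial in $|V|, |E|, N$'', the natural size parameters of these combinatorial games. This is immediate once I record the dimensions of each formulation: for \N, the resources are the arcs so $n = |E|$ and the incidence matrix has $m = |V|$ rows; for \M\ and \EC\ the resources are the edges, so $n = |E|$ and $m = |V|$; and for \SS\ and \VC\ the resources are the nodes, so $n = |V|$ and $m = |E|$. In all cases $n$ and $m$ are linear in $|V| + |E|$, so the algorithm of Theorem~\ref{TU} runs in time polynomial in $|V|, |E|, N$. The only point that needs genuine verification --- rather than being a one-line deduction --- is the content already packaged in Proposition~\ref{p: tu comb}, namely that the path, matching, cover, and stable-set strategy sets of the stated graphs admit TU defining systems; given that, the corollary is an immediate specialization of Theorem~\ref{TU}.
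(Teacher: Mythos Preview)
Your proposal is correct and follows precisely the paper's approach: the paper derives Corollary~\ref{combgames} as an immediate consequence of Theorem~\ref{TU} together with Proposition~\ref{p: tu comb}, after noting (just before the corollary) that the symmetric versions of \N, \M, \EC, \SS, \VC\ satisfy $A^i=A$, $b^i=b$. Your additional remarks on the correspondence between $n,m$ and $|V|,|E|$ are harmless extra bookkeeping that the paper leaves implicit.
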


The algorithm given in the proof of Theorem \ref{TU} has some nice combinatorial interpretations for the above combinatorial games.
As an example, consider games \M.
Given bipartite graph $G = (V, E)$ and delays $d_e$, we construct a new bipartite graph $G' = (V,E')$ by replacing each edge $e \in E$ with $N$ parallel edges $e^1,\dots,e^N$ between the same nodes, with weights $-d_e(1), \dots, -d_e(N)$.
Solving problem \eqref{agg} is equivalent to finding a maximum weight subset $F$ of $E'$ such that each node $v$ in $V$ is incident to at most $N$ edges in $F$.
This is a \emph{simple $b$-matching} problem (see Ch.~21 in \cite{Sch03}).
Now let $\tilde G$ be the subgraph of $G$ obtained by deleting edge $e \in E$ if no edge among $e^1,\dots,e^N$ is in $F$.
Moreover, let $U$ be the set of nodes that have degree $N$ in $(V,F)$.
Finding an integer vector in \eqref{e: P1} is then equivalent to finding a matching in the graph $\tilde G$ covering all nodes in $U$.
Games \EC, \VC, and \SS\ have similar combinatorial interpretations.
For the symmetric case of \N, we recover the algorithm described in \cite{FabPapTal04}.

\smallskip
We now consider some variants of the combinatorial games previously defined that do not appear to be TU congestion games.
A \emph{maximum cardinality matching congestion game (\CM)} is a variant of \M\ where each set $X^i$ consists only of the maximum cardinality matchings in $G^i$.
Similarly, in a \emph{minimum cardinality edge cover congestion game (\CEC)} each set $X^i$ consists of the minimum cardinality edge covers in $G^i$; in a \emph{maximum cardinality stable set congestion game (\CSS)} each set $X^i$ consists of the maximum cardinality stable sets in $G^i$; in a \emph{minimum cardinality vertex cover congestion game (\CVC)} each set $X^i$ consists of the minimum cardinality vertex covers in $G^i$.

By Proposition \ref{reduction} and Corollary \ref{combgames} we obtain the following:

\begin{corollary}
There is a strongly polynomial-time algorithm for finding a pure Nash equilibrium in the symmetric case of the following games on bipartite graphs:
\CM,
\CEC,
\CSS,
\CVC.
\end{corollary}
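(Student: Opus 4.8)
The plan is to derive each of the four statements by reducing the cardinality-restricted game to its unrestricted combinatorial counterpart --- \M, \EC, \SS, \VC\ respectively --- and then invoking Corollary~\ref{combgames}. Concretely, \CM\ (resp.\ \CSS) is exactly the game obtained from \M\ (resp.\ \SS) by retaining only the maximum cardinality strategies of each player. In the language of Proposition~\ref{reduction}, it is the restricted game $\bar C$ corresponding to the full game $C = $ \M\ (resp.\ $C = $ \SS). Given a symmetric instance $I$ of \CM\ (resp.\ \CSS) on a bipartite graph, I would apply Proposition~\ref{reduction} to produce in strongly polynomial time an instance $h(I)$ of \M\ (resp.\ \SS). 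By Proposition~\ref{p: tu comb} this is a TU congestion game on a bipartite graph, and since the reduction alters only the edge delays by an additive constant --- the same for every player --- the resulting game is still symmetric. Hence Corollary~\ref{combgames} yields a pure Nash equilibrium of $h(I)$ in strongly polynomial time, which by part (i) of Proposition~\ref{reduction} is a pure Nash equilibrium of $I$. Composing a strongly polynomial reduction with a strongly polynomial algorithm keeps the overall running time strongly polynomial.

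The remaining two games, \CEC\ and \CVC, are the \emph{minimum} cardinality variants of \EC\ and \VC, whereas Proposition~\ref{reduction} is stated for maximum cardinality strategies; this is the one point that needs extra care, and I expect it to be the main obstacle. I would resolve it by a verbatim mirror of the argument behind Proposition~\ref{reduction}: where that reduction adds a sufficiently large \emph{negative} constant to every resource delay, so that each player is driven to use as many resources as possible and is thereby forced onto a maximum cardinality strategy, here I would add a sufficiently large \emph{positive} constant. Then each resource used increases a player's cost by the same fixed amount, every player is driven to use as few resources as possible, and only minimum cardinality strategies survive. Among those, the original delays break ties exactly as before, so the correspondence of pure Nash equilibria of $\bar C$ and $C$ is preserved; and, as above, only the delays are shifted by a uniform constant, so symmetry is retained. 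With this minimum cardinality analogue of Proposition~\ref{reduction} in hand, \CEC\ reduces to \EC\ and \CVC\ reduces to \VC, both TU congestion games on bipartite graphs by Proposition~\ref{p: tu comb}, and Corollary~\ref{combgames} finishes the argument as in the maximum cardinality case.

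The quantitative check underlying both reductions is that the additive constant can be chosen so as to be strongly polynomial in the input. It must dominate the total variation of the delays over a single player's strategy --- at most $n$ times the largest delay value --- so that in every pairwise comparison of a player's options the cardinality term strictly dominates the delay term and the extreme-cardinality strategies are selected. Since the delay functions are given explicitly and $n = |V|$ (for \VC\ and \CVC) or $n = |E|$ (for \EC\ and \CEC), such a constant has encoding length polynomial in the input, so the reduction remains strongly polynomial. Once this is established, the four results follow uniformly by the composition described above, with the only nonroutine ingredient being the minimum cardinality adaptation of Proposition~\ref{reduction}.
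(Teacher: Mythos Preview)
Your approach is correct and matches the paper's one-line proof, which simply cites Proposition~\ref{reduction} and Corollary~\ref{combgames}. You have in fact been more careful than the paper: Proposition~\ref{reduction} is stated only for \emph{maximum} cardinality restrictions, so it applies verbatim to \CM\ and \CSS, but for \CEC\ and \CVC\ one indeed needs the mirror-image reduction (add a large positive constant rather than a large negative one) that you spell out. That variant goes through for exactly the reasons you give --- adding a constant preserves monotonicity of the delays, forces every best response onto a minimum-cardinality strategy, and leaves the game symmetric --- so the composition with Corollary~\ref{combgames} is valid and strongly polynomial.
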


Due to the wide applicability of the original combinatorial problems, also the combinatorial congestion games defined in this section have a large number of applications.
For example, consider a bipartite graph representing a road network with entry nodes and exit nodes.
Each player is an advertiser, who places her ad on minimum cardinality vertex cover of the graph, which is the cheapest way to expose each traveler to the ad.
In each node, the probability that a traveler sees a given ad decreases with respect to the total number of ads that have been placed there by all players. This yields nondecreasing delay functions associated to nodes.
Therefore each advertiser chooses a minimum cardinality vertex cover maximizing the expected number of times that her ad is seen by the travelers.

\subsection{The asymmetric case}
\label{sec: asymmetric}
In this section we focus on the asymmetric version of the combinatorial congestion games that we have just introduced. 
For all such combinatorial games on bipartite graphs, we show that 
it is PLS-complete to find pure Nash equilibria, 
even in the case of linear delay functions,
and that it is NP-hard to find socially optimal states,
even in the case of weakly convex delay functions.

We recall that potential games can be regarded as local search problems where the function to minimize is the potential function \eqref{e: potential function}, and the neighborhood $N(x)$ of a state $x \in X$ is the set of states arising from single player defections, i.e.,~$N(x) =\{(\bar x^i, x^{-i}) : \bar x^i \in X^i \setminus \{x^i\}, \ i\in\{1,\dots,N\}\}$.

\paragraph{PLS-completeness.}
Informally, a \emph{polynomial-time local search (PLS)} problem \cite{JohPapYan88} is a local search problem equipped with a polynomial-time algorithm that, given a solution $x$, either computes a better solution in $N(x)$, or determines that no such solution exists, meaning that $x$ is a \emph{local optimum}. This yields an algorithm that at each step moves from a solution to an improving neighboring solution. However this algorithm may still require an exponential number of steps to converge to a local optimum. A problem $\Pi$ is \emph{PLS-reducible} to a problem $\Pi'$ if any instance $\pi$ of $\Pi$ can be mapped in polynomial time to an instance $\pi'$ of $\Pi$, and any local optimum of $\pi'$ can be mapped in polynomial time to a local optimum of $\pi$. A problem in PLS is \emph{PLS-complete} if every problem in PLS is PLS-reducible to it. For formal definitions on PLS, we refer to \cite{JohPapYan88,SchYan91}.

A well-known PLS-complete problem is POS NAE 2SAT \cite{SchYan91}, i.e.~not-all-equal-2SAT with positive literals only: an instance consists of clauses in conjunctive normal form; each clause has at most two positive literals and is assigned a positive value; each clause is satisfied if its constituents do not all have the same value. A solution is a \emph{truth assignment}, i.e.~a $0/1$ assignment to all variables; the value of a solution, to be maximized, is the sum of the values of the satisfied clauses; the neighborhood of a solution contains all solutions obtained by flipping the value of one variable. The local search problem is to find a truth assignment whose value cannot be increased by flipping a variable.

We remark that asymmetric TU congestion games in the form \eqref{e: TU congestion game} are PLS-complete, even if $A = A^i$, $i=1,\dots,N$, and even in case of linear delay functions.
This follows directly from the PLS-completeness of asymmetric \N\ with linear delay functions~\cite{AckRogVoc08} (see also \cite{FabPapTal04}) and from the proof of Proposition~\ref{p: tu comb}.
%
%
Note that all the asymmetric variants of problems \M, \EC, \SS $\,$ and \VC $\,$ on bipartite graphs can be written as TU congestion games with $A = A^i$, for $i=1,\dots,N$. 
Our goal is to prove that all these asymmetric TU congestion games are PLS-complete. 

To this purpose we define the \emph{perfect matching congestion game} (\PM) as a variant of \M\ where for every player $i$, the subgraph $G^i$ admits a perfect matching, and where the set $X^i$ consists only of the perfect matchings in $G^i$.
An instance of \PM\ is an instance of \CM\ and of \CEC\, where all subgraphs $G^i$ admit a perfect matching.
Similarly, the \emph{perfect vertex cover congestion game} (\PVC) is a variant of \VC\ where for every player $i$, the subgraph $G^i$ admits a \emph{perfect vertex cover}, \ie a vertex cover that is also a stable set, and where the set $X^i$ consists only of the perfect vertex covers in $G^i$.
An instance of \PVC\ is an instance of \CVC\ and of \CSS\, where all subgraphs $G^i$ admit a perfect vertex cover.

\begin{theorem}
\label{t: perfect matching PLS}
It is PLS-complete to find a pure Nash equilibrium in the asymmetric versions of 
\PM,
\M,
\EC,
on bipartite graphs and even in the case of linear delay functions.
\end{theorem}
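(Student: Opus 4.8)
The plan is to prove the statement in two stages: membership in PLS, and PLS-hardness via a reduction from POS NAE 2SAT, done first for \PM\ and then transferred to \M\ and \EC. Membership is the easy part: all three are congestion games, hence potential games with potential~\eqref{e: potential function}, and a single improving deviation can be found in polynomial time because a player's best response is a minimum-weight (perfect) matching or minimum-weight edge cover on a bipartite graph. For hardness, given a POS NAE 2SAT formula with variables $x_1,\dots,x_k$ and positively weighted clauses, I would introduce one player per variable and encode the value of $x_i$ as the choice between exactly two perfect matchings of player $i$'s subgraph $G^i$. Concretely, for each clause $c=\{x_i,x_j\}$ of weight $w_c$ I create two shared resource edges $e_c^{T}$ and $e_c^{F}$, and I let $G^i$ be an even cycle of length $2m_i$ (where $m_i$ is the number of clauses containing $x_i$) whose consecutive edges are the $e_c^{T},e_c^{F}$ of the clauses $c\ni x_i$, arranged so that one of the cycle's two perfect matchings (``true'') selects all the $e_c^{T}$ and the other (``false'') selects all the $e_c^{F}$. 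Then $e_c^{T}$ is used by player $i$ exactly when $x_i$ is true and by player $j$ exactly when $x_j$ is true, and symmetrically for $e_c^{F}$.

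Assigning the linear delay $d(t)=w_c\,t$ to $e_c^{T}$ and $e_c^{F}$, the contribution of clause $c$ to the potential~\eqref{e: potential function} is $3w_c$ when $x_i,x_j$ agree and $2w_c$ when they differ; summing over clauses yields $\phi=2W+\sum_{c\text{ unsatisfied}}w_c$ up to an additive constant, where $W$ is the total clause weight. Hence minimizing $\phi$ is equivalent to maximizing the weight of not-all-equal-satisfied clauses. Since each player has exactly two strategies, its only single-player deviations are the flips $\text{true}\leftrightarrow\text{false}$, so the neighborhood structure of the game coincides with that of POS NAE 2SAT under single-variable flips. The instance map sends the formula to this \PM\ game and the solution map reads the truth assignment off a state; a state is a pure Nash equilibrium (a local minimum of $\phi$) if and only if the corresponding assignment is locally optimal, which yields the PLS reduction and, together with membership, PLS-completeness of \PM.

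To reach \M\ and \EC\ I would invoke Proposition~\ref{reduction}. Every $G^i$ above admits a perfect matching, and perfect matchings are exactly the maximum-cardinality matchings, so each \PM\ instance is a \CM\ instance; likewise, in a graph with a perfect matching the perfect matchings are precisely the minimum-cardinality edge covers, so it is also a \CEC\ instance. Proposition~\ref{reduction}, which only perturbs the delays by a constant and hence preserves linearity, then reduces \CM\ to \M\ and \CEC\ to \EC, transferring PLS-hardness to both. The main obstacle I expect is the gadget design: I must guarantee that each $G^i$ has \emph{exactly} two perfect matchings (so that deviations are precisely variable flips) while keeping the \emph{global} graph $G$ bipartite, even though shared clause edges force the variable cycles to overlap. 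I would address this by subdividing each clause edge with a few fresh degree-two vertices, designating the middle edge as the shared resource (the auxiliary edges receiving zero delay); this preserves the two-state behavior of every cycle, and by adjusting the parities of the subdivision paths it yields a consistent two-coloring of $G$. Checking these two properties at once, the exact count of perfect matchings and global bipartiteness, is the delicate step of the argument.
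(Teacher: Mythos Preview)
Your overall plan coincides with the paper's: reduce POS NAE 2SAT to asymmetric \PM\ with one player per variable, each player's subgraph an even cycle with exactly two perfect matchings encoding true/false, so that the potential equals (up to an additive constant) the weight of unsatisfied clauses; then transfer hardness to \M\ and \EC\ via Proposition~\ref{reduction}. Your delay choice $d(t)=w_c t$ and the potential computation are correct, and the use of Proposition~\ref{reduction} (which shifts delays by a constant and hence preserves linearity) is exactly what the paper does.

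The real difference, and the gap you yourself anticipate, is the gadget. In your construction the cycles $G^i$ have varying lengths $2m_i$ and must literally share the adjacent edges $e_c^T,e_c^F$; once two cycles share a pair of adjacent edges, the identifications of their endpoints are forced, and you have not established that a single bipartite host graph $G$ can realize all these overlaps simultaneously. Your subdivision patch is plausible but not carried out: you do not show that the parity adjustments can be made consistent across all clauses and all players at once. The paper avoids this difficulty with a different device. For every clause $c_j$ it builds a $4$-cycle $u_j,v_j,z_j,\bar v_j$, chains the gadgets by identifying $z_j=u_{j+1}$, and lets \emph{every} player's subgraph be the $2n$-cycle that passes through $v_j$ when $x_i\in c_j$ and through the zero-delay ``bypass'' vertex $\bar v_j$ when $x_i\notin c_j$. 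All $G^i$ then have the same length $2n$, bipartiteness is immediate (bipartition $\{v_j,\bar v_j\}_j$ versus $\{u_j\}_j$), each $G^i$ has exactly two perfect matchings by inspection, and the only congested edges are $u_jv_j$ and $v_jz_j$, with linear delays $w_j(i-1)$. This ``every player visits every clause, using a free bypass when uninvolved'' idea is precisely what dissolves the obstacle you identified. A minor further point: POS NAE 2SAT allows clauses with fewer than two variables (or containing constants), which the paper handles with separate constant delays; your sketch treats only the two-variable case and would need to be extended.
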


\begin{proof}
$(i)$ 
We give a PLS-reduction of POS NAE 2SAT to an asymmetric \PM\ on a bipartite graph $G$.
First, we define a map $h$ from any instance of POS NAE 2SAT to an instance of an asymmetric \PM\ on a bipartite graph $G$.
Given an instance $I$ of POS NAE 2SAT, we construct a congestion game $h(I)$ as follows. Denote by $C=\{c_1,\dots,c_n\}$ the clauses of $I$ and by $\{x_1,\dots,x_N\}$ its variables. Let $w_j$ be the value of clause $c_j$ and, for a truth assignment $x$ of $I$ denote by $w(x)$ the value of $x$. Each variable of POS NAE 2SAT is a player of the \PM\ and each NAE clause is a set of resources, i.e.~a set of edges. Precisely, for each NAE clause $c_j$ we build the graph ``gadget'' in Fig.~\ref{fig: matching}$(i)$. The graph gadget of clause $c_j$ is a 4-cycle $u_j,v_j,z_j,\bar{v}_j,u_j$.

Let $m_j \in \{1,2\}$ denote the number of variables in $c_j$. 
The edge delays are defined as follows:
\begin{itemize}[noitemsep,nolistsep]
\item If $e=\bar{v}_j u_j$ or $e=\bar{v}_j z_j$, then $d_e(i) = 0$ for $i=1,\dots,N$;
\item If $e=v_ju_j$ and $c_j$ contains at least a constant equal to $1$, then $d_e(i) = 0$ for $i=1,\dots,N$;
otherwise, $d_e(i) = w_j$ for $i=1,\dots,N$ if $m_j=1$, and $d_e(i) = w_j (i-1)$ if $m_j=2$;
\item If $e=v_j z_j$ and $c_j$ contains at least a constant equal to $0$, then $d_e(i) = 0$ for $i=1,\dots,N$; 
otherwise, $d_e(i) = w_j$ for $i=1,\dots,N$ if $m_j=1$, and $d_e(i) = w_j (i-1)$ if $m_j=2$.
\end{itemize}
Now, we build graph $G$ as follows: for any two clauses $c_j$ and $c_{j+1}$, $j=1,\dots,n-1$ we identify $z_j$ and $u_{j+1}$; moreover, we identify $z_n$ and $u_{1}$, see Fig.~\ref{fig: matching}$(ii)$.
Clearly, $G$ is a bipartite graph with bipartitions $\{v_j,\bar{v}_j\}_{j=1,\dots,n}$ and $\{u_j\}_{j=1,\dots,n}$. For each $i=1,\dots,N$, let $C(i)$ denote the set of clauses containing variable $x_i$ and let $V_i = \{u_j,z_j:j=1,\dots,n\} \cup \{v_j : c_j \in C(i)\} \cup \{\bar{v}_j  : c_j \notin C(i)\}$. We assign to player $i$ the subgraph $G_i$ of $G$ induced by nodes in $V_i$.
This shows how to map $I$ to an instance $h(I)$ of asymmetric \PM\ on a bipartite graph.

Next, we define a map $g$ from states of $h(I)$ to truth assignments of $I$. A state of $h(I)$ is a set of $N$ perfect matchings on graphs $G_i$, $i=1,\dots,N$.
Note that each subgraph $G_i$ is a cycle of length $2n$  that admits two perfect matchings:
$M^i_0 = \{u_j v_j : c_j \in C(i)\} \cup \{u_j \bar{v}_j : c_j \notin C(i)\}$ and 
$M^i_1 = \{v_j z_j  : c_j \in C(i)\} \cup \{\bar{v}_j z_j: c_j \notin C(i)\}$.
Let $g_i : \{M^i_0,M^i_1\} \rightarrow \{0,1\}$ such that $g_i(M^i_0) = 0$ and $g_i(M^i_1) = 1$. We map strategy $M^i \in \{M^i_0,M^i_1\}$ of player $i$ to $x_i = g_i(M^i)$.
Setting $g = (g_1,\dots,g_N)$ shows that any state of $h(I)$ is mapped to a truth assignment of $I$, and that the mapping is bijective.

Finally, we need to show that any pure Nash equilibrium of $h(I)$ maps to a local optimum of $I$.
First, we remark that, for any truth assignment $x$ of $I$, each $x' \in N(x)$
obtained by flipping variable $x_i$, $i \in \{1,\dots,N\}$ is in one-to-one correspondence with the state obtained from $g^{-1}(x)$ after the defection of player $i$.
Now, let $M^1,\dots,M^N$ be a pure Nash equilibrium of $h(I)$, where $M^i \in \{M^i_0,M^i_1\}$ for all $i=1,\dots,N$, and denote by $y$ the corresponding state. Then, for any state $y'$ obtained from $y$ by switching perfect matching $M^i$, we have that $f_i(y') - f_i(y) \ge 0$, where $f_i$ denotes the cost of player $i$ in \PM.
By construction, for $x = g(y)$ and $x' = g(y')$ we have that
$w(x) - w(x') =  f_i(y') - f_i(y) \ge 0$.
This proves that $x$ is a local optimum of $I$.
%
%

$(ii)$ 
Since \PM\ is equivalent to \CM\ when all subgraphs $G^i$ admit a perfect matching, $(i)$ implies that asymmetric \CM\ is PLS-complete on a bipartite graph. 
The result then follows by Proposition \ref{reduction}.

$(iii)$ 
Since \PM\ is equivalent to \CEC\ when all subgraphs $G^i$ admit a perfect matching, $(i)$ implies that asymmetric \CEC\ is PLS-complete on a bipartite graph. 
The result then follows by Proposition \ref{reduction}.
\end{proof}
We can show analogous result for the asymmetric \PVC, \VC, and \SS\ on a bipartite graph, see the Appendix for a proof.
\begin{theorem}
\label{t: perfect vertex cover PLS}
It is PLS-complete to find a pure Nash equilibrium in the asymmetric versions of 
\PVC,
\VC,
\SS,
on bipartite graphs and even in the case of linear delay functions.
\end{theorem}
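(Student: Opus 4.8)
The plan is to mirror the proof of Theorem~\ref{t: perfect matching PLS}: I would give a PLS-reduction from POS NAE 2SAT to the asymmetric \PVC\ on a bipartite graph, and then deduce \VC\ and \SS\ from it. Each variable $x_i$ becomes a player and each clause $c_j$ becomes a gadget, but since the resources of \PVC, \VC, and \SS\ are \emph{nodes}, the delays now live on nodes rather than on edges. First I would build for each clause a node gadget and glue the gadgets across clauses so that the subgraph $G_i$ assigned to player $i$ is again an even cycle. An even cycle has exactly two perfect vertex covers, namely its two alternating independent sets $S^i_0$ and $S^i_1$; setting $g_i(S^i_0)=0$ and $g_i(S^i_1)=1$ then gives a bijection between states of $h(I)$ and truth assignments of $I$, exactly as $M^i_0, M^i_1$ did in the matching case. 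Moreover, on an even cycle both perfect vertex covers are simultaneously the minimum-cardinality vertex covers and the maximum-cardinality stable sets, which is what will let the same instance serve for \CVC\ and \CSS.

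The heart of the argument is the placement of \emph{linear} node delays that turn each player's cost into (a constant plus) the total weight of the clauses violated by the corresponding truth assignment, and I expect the node setting to be the main obstacle. In the matching gadget a single clause node $v_j$ sufficed, because its two incident edges---one selected under $M_0$, the other under $M_1$---gave two \emph{private} detectors, one for each way the two variables of $c_j$ can agree. A vertex cover offers no such pair of edges, and a single shared node detects only the agreement ``both true'': if $v_j$ is placed so that it lies in a player's cover iff her truth value is $1$, then the congestion of $v_j$ equals $x_a+x_b$, whose potential (with delay $w_j(t-1)$) contributes $w_j x_a x_b$, i.e.~only the $x_a x_b$ term of $[x_a=x_b]$. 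To capture the symmetric term $w_j(1-x_a)(1-x_b)$ I would enlarge the gadget with a \emph{second} clause-private node $q_j$, placed at the opposite parity of the cycle so that $q_j$ lies in a player's cover iff her truth value is $0$; its congestion is then $(1-x_a)+(1-x_b)$ and, with the same delay, it contributes $w_j(1-x_a)(1-x_b)$. The two contributions sum to $w_j[x_a=x_b]$, as required, and crucially both $v_j$ and $q_j$ occur only in the subgraphs of the two players of $c_j$, so the spine nodes shared by all players can be given zero delay. One-variable clauses and clauses carrying a fixed constant are handled by making the relevant node delays constant, exactly as in Theorem~\ref{t: perfect matching PLS}, and bipartiteness is inherited from the even cycle structure.

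With the delays so chosen, for any state $y=g^{-1}(x)$ the change $f_i(y')-f_i(y)$ caused by player $i$ switching between $S^i_0$ and $S^i_1$ equals $w(x)-w(x')$, where $x'$ flips $x_i$; since single-player defections correspond bijectively to single-variable flips, every pure Nash equilibrium of $h(I)$ maps under $g$ to a local optimum of $I$, which settles the \PVC\ case. Finally, because every $G_i$ admits a perfect vertex cover, the instance $h(I)$ is simultaneously an instance of \CVC\ and of \CSS\ (the two perfect vertex covers being the minimum vertex covers and the maximum stable sets of each cycle), so asymmetric \CVC\ and \CSS\ on bipartite graphs are PLS-complete; the full games \VC\ and \SS\ then follow from Proposition~\ref{reduction} exactly as \M\ and \EC\ were obtained in Theorem~\ref{t: perfect matching PLS}.
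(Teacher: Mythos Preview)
Your proposal is correct and takes essentially the same approach as the paper: the paper's gadget for clause $c_j$ is the 8-cycle $u_j,s_j,v_j,t_j,z_j,\bar t_j,\bar v_j,\bar s_j,u_j$, with $s_j$ and $v_j$ playing exactly the roles of your two opposite-parity clause-private nodes $q_j$ and $v_j$ (carrying delay $w_j(i-1)$ in the two-variable case), and the derivation of \VC\ and \SS\ from \CVC\ and \CSS\ via Proposition~\ref{reduction} is identical.
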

%



\paragraph{NP-hardness.}

Computing a socially optimal state for asymmetric TU congestion games is NP-hard, even if $A = A^i$, $i=1,\dots,N$.
This follows from the NP-hardness of asymmetric \N\ \cite{MeySch12}.
The next two theorems show that finding socially optimal states in the combinatorial games that we have introduced is NP-hard.
The proof of Theorem~\ref{th: NP-hard 1} is given in the Appendix, and is based on a reduction similar to the one given in the proof of Theorem~\ref{t: perfect matching PLS}.

\begin{theorem}
\label{th: NP-hard 1}
It is NP-hard to find a socially optimal state in the asymmetric versions of 
\PM,
\M,
\EC,
on bipartite graphs and even in the case of weakly convex delay functions.
\end{theorem}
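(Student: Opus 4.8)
The plan is to reduce from (weighted) MAX CUT, reusing almost verbatim the gadget construction from the proof of Theorem~\ref{t: perfect matching PLS}. Recall that a two‑literal positive NAE clause on variables $a,b$ is satisfied exactly when $a\neq b$, i.e.\ when the edge $\{a,b\}$ is cut; hence maximizing the total weight of satisfied clauses in weighted POS NAE 2SAT restricted to two‑literal clauses is precisely MAX CUT, which is NP-hard. Given a weighted graph $H$, I would interpret each vertex as a variable (hence a player) and each edge $\{a,b\}$ of weight $w_j$ as a clause $c_j$ with $m_j=2$ and no constants, and build the instance $h(I)$ of asymmetric \PM\ on the bipartite graph $G$ exactly as in Theorem~\ref{t: perfect matching PLS}. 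In particular both weighted edges $v_ju_j$ and $v_jz_j$ of the gadget carry the delay $d_e(i)=w_j(i-1)$, which is linear and therefore weakly convex, and I reuse the bijection $g$ between states of $h(I)$ and truth assignments of $I$ established there.

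The heart of the proof is to show that the social delay $\gamma$ from \eqref{e: social delay} equals, up to an affine transform, the weight of the \emph{unsatisfied} clauses, rather than the potential used in the PLS reduction. Fixing a gadget $c_j$ on variables $a,b$, only players $a,b$ ever use the two weighted edges, and the number of them using $v_ju_j$ (resp.\ $v_jz_j$) is the number of the two variables set to $0$ (resp.\ $1$). Since $t\cdot d_e(t)=t\,w_j(t-1)$ is $0$ for $t\in\{0,1\}$ and $2w_j$ for $t=2$, the gadget contributes $2w_j$ to $\gamma$ when $a$ and $b$ agree (clause violated) and $0$ when they disagree (clause satisfied), while every zero‑delay edge $\bar v_ju_j,\bar v_jz_j$ contributes nothing. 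Summing over clauses gives $\gamma(g^{-1}(x))=2\sum_{c_j\text{ violated by }x}w_j=2(W-\mathrm{cut}(x))$ with $W$ the total weight, so a socially optimal state corresponds precisely to a maximum cut of $H$, establishing NP-hardness of the socially optimal state for \PM\ with weakly convex delays.

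To pass to \M\ and \EC, I would argue exactly as in parts $(ii)$ and $(iii)$ of Theorem~\ref{t: perfect matching PLS}: each subgraph $G^i$ in the construction is a cycle of length $2n$ and hence admits a perfect matching, so on these instances \PM\ coincides with \CM\ and with \CEC. Thus the hardness above transfers to \CM\ and \CEC, and Proposition~\ref{reduction}—whose reduction alters the edge delays only by an additive constant and sends socially optimal states of $h(I)$ to socially optimal states of $I$—lifts it to \M\ and \EC\ while preserving weak convexity.

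The main obstacle I anticipate is the middle step: one must verify that it is the \emph{congestion‑weighted} objective $\sum_j t_j\,d_j(t_j)$, and not the potential function, that lines up with MAX CUT. This works because $t\mapsto t\,d_e(t)$ is (weakly) superadditive for the linear delays, which makes the two ``all‑equal'' configurations of a gadget strictly more expensive than the ``split'' one. The remaining care is routine bookkeeping: checking that no other edges of $G$ interfere with this per‑gadget accounting, and that the additive constant introduced by Proposition~\ref{reduction} shifts $\gamma$ uniformly and hence leaves the set of minimizers unchanged.
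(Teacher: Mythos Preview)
Your proposal is correct, and it takes a genuinely different route from the paper's own proof. The paper reduces from the \emph{decision} problem POS NAE 3SAT: it reuses the gadget of Theorem~\ref{t: perfect matching PLS} but with new delays $d_e(1)=0$, $d_e(i)=i-2$ for $i\ge 2$ on the two weighted edges, and observes that with three players per clause the social delay of a state is zero if and only if each clause has at most two variables of each value, i.e.\ the POS NAE 3SAT formula is satisfiable. Your argument instead reduces from the \emph{optimization} problem MAX CUT, keeps the $m_j=2$ delays $d_e(i)=w_j(i-1)$ verbatim from Theorem~\ref{t: perfect matching PLS}, and establishes an affine correspondence $\gamma=2(W-\mathrm{cut})$ between the social delay and the cut value. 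Both approaches legitimately prove NP-hardness and handle the extension to \M\ and \EC\ identically via Proposition~\ref{reduction}; your version has the virtue of reusing the PLS construction without modifying the delays and of matching objective values exactly, while the paper's threshold argument is slightly cleaner in that it avoids weighted instances and yields a crisp yes/no certificate (social delay $=0$).
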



\begin{theorem}
\label{t: perfect vertex cover NPH}
It is NP-hard to find a socially optimal state in the asymmetric versions of
\PVC,
\VC,
\SS,
on bipartite graphs and even in the case of weakly convex delay functions.
\end{theorem}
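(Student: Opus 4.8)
The plan is to prove Theorem~\ref{t: perfect vertex cover NPH} by mirroring the structure of Theorem~\ref{th: NP-hard 1}, reducing from an NP-hard source problem to the asymmetric \PVC\ on a bipartite graph, and then propagating the hardness to \VC\ and \SS\ via the maximum/minimum cardinality reductions. Since Theorem~\ref{th: NP-hard 1} already establishes NP-hardness for the matching side using a gadget construction, the most economical route is to exploit the duality between perfect matchings and perfect vertex covers on the cycle gadgets: just as each player's subgraph $G_i$ in the proof of Theorem~\ref{t: perfect matching PLS} is a $2n$-cycle with exactly two perfect matchings, the analogous construction for vertex covers yields a subgraph with exactly two perfect vertex covers, encoding the two truth values of a variable.

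First I would recall that the PLS-reduction gadget of Theorem~\ref{t: perfect vertex cover PLS} (proved in the Appendix) already builds, for each variable $x_i$, a subgraph $G_i$ admitting exactly two perfect vertex covers, which we identify with the two truth assignments of $x_i$. The key observation is that computing a socially optimal state is equivalent to \emph{globally} minimizing the social delay $\gamma(x)$ of~\eqref{e: social delay} over all states, rather than finding a local optimum. So I would reuse the same bipartite graph $G$ and the same player subgraphs, but replace the linear delay functions with weakly convex ones chosen so that the \emph{global} minimizer of $\gamma$ corresponds exactly to a maximum-value (i.e.\ globally optimal) NAE truth assignment. Concretely, I would set the node delays on each clause gadget so that a clause $c_j$ with both its variables assigned equal truth values incurs an additional social cost proportional to $w_j$, while an NAE-satisfied clause incurs none; the weak convexity constraint $i\,d_j(i)-(i-1)d_j(i-1)\le(i+1)d_j(i+1)-i\,d_j(i)$ is easily arranged by using delays that are piecewise linear and zero until a threshold usage of $2$ is reached, as in the $d_e(i)=w_j(i-1)$ pattern from the matching proof.

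The reduction source should be an optimization version whose \emph{global} optimum is NP-hard, namely MAX POS NAE 2SAT (the value-maximization variant, which is NP-hard in contrast to its merely PLS-complete local-search version). Under this reduction, a socially optimal state of the \PVC\ instance yields a maximum-weight NAE truth assignment, so finding it is NP-hard. Having established hardness for \PVC, I would invoke the fact that an instance of \PVC\ is simultaneously an instance of \CVC\ and of \CSS\ when every $G^i$ admits a perfect vertex cover, giving NP-hardness for \CVC\ and \CSS; Proposition~\ref{reduction} then transfers hardness from \CVC\ to \VC\ and from \CSS\ to \SS, since that reduction only perturbs the delays by a constant and preserves socially optimal states.

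The main obstacle I anticipate is verifying that the constructed delay functions are genuinely \emph{weakly convex} while still forcing the social optimum to align with the NAE-maximizing assignment. In the PLS (local-search) proofs one only needs the potential-function differences to match clause values locally, but for social optimality one must control the \emph{total} delay $\sum_j t_j(x)\,d_j(t_j(x))$ globally, and the multiplier $t_j(x)$ interacts nontrivially with the choice of $d_j$. The delicate point is choosing $d_j$ so that $\gamma_j(z_j)=z_j d_j(z_j)$ penalizes precisely the ``all-equal'' clause configurations by the intended amount $w_j$ and nothing else; I would handle this exactly as in Theorem~\ref{th: sym soc opt}, passing to the transformed functions $d'_j(i)=i\,d_j(i)-(i-1)\,d_j(i-1)$ and designing $d_j$ so that these increments reproduce the clause-value structure, with weak convexity guaranteeing that $d'_j$ is nondecreasing and hence that the accounting is consistent across all usage levels.
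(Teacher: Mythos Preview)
The paper does not spell out a proof of this theorem; the intended argument is the vertex-cover analogue of the proof of Theorem~\ref{th: NP-hard 1}. That proof reduces from the \emph{decision} problem POS NAE 3SAT (not an optimization problem), reuses the gadget of the corresponding PLS theorem, and assigns the uniform weakly convex delays $d(1)=0$, $d(i)=i-2$ for $i\ge 2$ to the two distinguished resources of each clause gadget. For a three-literal clause the contribution to the social delay is then $3$ if all three variables agree and $0$ otherwise, so the instance has a state of social delay zero if and only if the NAE formula is satisfiable. The analogous construction with the $8$-cycle gadget of Theorem~\ref{t: perfect vertex cover PLS} and delays on the nodes $s_j,v_j$ yields \PVC; the extension to \VC\ and \SS\ via \CVC, \CSS\ and Proposition~\ref{reduction} is exactly as you outline.

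Your route is a genuine alternative: you keep two-literal clauses and reduce from the \emph{optimization} problem MAX POS NAE 2SAT, using clause-dependent delays $d(i)=w_j(i-1)$. This works, because MAX POS NAE 2SAT with two-variable clauses is precisely weighted MAX CUT (a clause $\mathrm{NAE}(x,y)$ of weight $w$ is satisfied iff $x\neq y$), which is NP-hard; with your delays the social cost of a clause gadget is $2w_j$ when the two variables agree and $0$ otherwise, so a socially optimal state yields a maximum cut. You should, however, state the MAX CUT identification explicitly rather than just assert that MAX POS NAE 2SAT is NP-hard. Compared with the paper's route, yours carries a little more baggage (weighted, instance-dependent delays and an optimization-to-optimization reduction), whereas the POS NAE 3SAT reduction uses uniform delays and lands on a clean decision question (``is the optimal social delay zero?''), which also dissolves the weak-convexity bookkeeping you flag as your main obstacle. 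Either approach is sound.
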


\newpage

\section*{Appendix}
\label{sec: appendix}

\begin{proof}[Proof of Proposition \ref{reduction}]
(i) Let $I$ be an instance of $\bar C$.
Let $\bar d_j(i)$, $i=1,\dots,N$, be the given delays in $I$, and let $\Delta := \max\{|\bar d_j(i)| : j \in R, \ i=1,\dots,N\}$.
We set the delays in $h(I)$ to be $d_j(i)=\bar d_j(i) - (2|R| \Delta + 1)$. 
It can be checked that for every player $i$, the cost in $h(I)$ corresponding to a strategy that is of maximum cardinality will always be strictly smaller than the cost in $h(I)$ corresponding to a strategy that is not of maximum cardinality.

We show that a pure Nash equilibrium $x$ for $h(I)$ is also a pure Nash equilibrium for $I$.
The strategy of each player $i$ in $x$ is of maximum cardinality, thus $x$ is also a state of $I$.
The cost $f^i(x)$ of player $i$ in $h(I)$ is $f^i(x) = \bar f^i(x) - k^i (2|R| \Delta + 1)$, where $\bar f^i(x)$ is the cost of player $i$ in $I$, and $k^i$ is the cardinality of a maximum cardinality strategy of player $i$.
By contradiction, assume that there is another strategy $\tilde x^i$ for player $i$ in $I$ such that in the state $\tilde x$ obtained from $x$ by swapping $x^i$ with $\tilde x^i$, player $i$ has lower cost in $I$, \ie $\bar f^i(\tilde x) < \bar f^i(x)$.
Clearly $\tilde x^i$ is a another maximum cardinality strategy.
This implies $f^i(\tilde x) = \bar f^i(\tilde x) - k^i (2|R| \Delta + 1) < \bar f^i(x) - k^i (2|R| \Delta + 1) = f^i(x)$, contradicting the fact that $x$ is a pure Nash equilibrium for $h(I)$.

(ii) Let $I$ be an instance of $\bar C$.
Let $\bar d_j(i)$, $i=1,\dots,N$, be the given delays in $I$, and let $\Delta := \max\{|\bar d_j(i)| : j \in R, \ i=1,\dots,N\}$.
We set the delays in $h(I)$ to be $d_j(i)=\bar C_j(i) - (2N|R| \Delta + 1)$. 
It can be checked that the social delay in $h(I)$ corresponding to a state where each strategy is of maximum cardinality will always be strictly smaller than the social delay in $h(I)$ corresponding to a state where at least one strategy is not of maximum cardinality. 

We show that a socially optimal state $x$ for $h(I)$ is also a socially optimal state for $I$.
The strategy of each player $i$ in $x$ is of maximum cardinality, thus $x$ is also a state of $I$.
The social delay $\gamma(x)$ of $x$ in $h(I)$ is $\gamma(x) = \bar\gamma(x) - (\sum_{j=1}^n t_j(x)) (2N|R| \Delta + 1)$, where $\bar\gamma(x)$ is the social delay of $x$ in $I$.
By contradiction, assume that there is another state $\tilde x$ with lower social delay, \ie $\bar\gamma(\tilde x) < \bar\gamma(x)$.
Clearly $\tilde x$ is a state where the strategy of each player is of maximum cardinality, thus $\sum_{j=1}^n t_j(\tilde x) = \sum_{j=1}^n t_j(x)$.
This implies 
$$\gamma(\tilde x) = \bar\gamma(\tilde x) - (\sum_{j=1}^n t_j(\tilde x)) (2N|R| \Delta + 1) < \bar\gamma(x) - (\sum_{j=1}^n t_j(x)) (2N|R| \Delta + 1) = \gamma(x),$$
contradicting the fact that $x$ is a socially optimal state for $h(I)$.

Note that both the given reductions are strongly polynomial.
\end{proof}

\medskip

\begin{proof}[Proof of Proposition \ref{p: tu comb}]
Network congestion games are TU congestion games, since $x^i$ is the incidence vector of a dipath from $r^i$ to $s^i$ in digraph $D=(V, E)$ if and only if $A x^i = b^i, \ x^i \in \{0,1\}^E$, where $A$ is the $V \times E$ incidence matrix of $D$, and $b^i$ has entry $-1$ corresponding to node $r^i$, entry  $+1$ corresponding to node $s^i$, and all other entries are $0$.

\M\ and \EC\ on bipartite graphs are TU congestion games, since $x^i$ is the incidence vector of a matching (resp.~edge cover) in $G^i = (V^i,E^i)$ if and only if $A^i x^i \le 1$ (resp.~$A^i x^i \ge 1$), $x^i \in \{0,1\}^{E^i}$, where $A^i$ is the $V^i \times E^i$ incidence matrix of $G^i$.

\SS\ and \VC\ on bipartite graphs are TU congestion games, since $x^i$ is the incidence vector of a stable set (resp.~vertex cover) in $G^i=(V^i,E^i)$ if and only if $A^i x^i \le 1$ (resp.~$A x^i \ge 1$), $x^i \in \{0,1\}^{V_i}$, where $A^i$ is the $E^i \times V^i$ incidence matrix of $G^i$.
\end{proof}

\begin{proof} [Proof of Theorem \ref{t: matroid Nash}]
We first consider independent set matroid congestion games.
Again, the algorithm computes the global optimum of \eqref{e: potential problem}.
The aggregated problem \eqref{e: potential problem aggr} becomes
\begin{equation}
\label{e: potential problem aggr matroid}
\min \left\{ \sum_{j=1}^n \phi_j(z_j) : z(U) \le \sum_{i=1}^N r_i(U) \;\; \forall \; U \subseteq R, z \in \Z^n_+\right\},
\end{equation}
where, for all $j=1, \dots, n$, $\phi_j(z_j) = \sum_{i=1}^{z_j} d_j(i)$. Note that $\phi_j(z_j)$ is weakly convex in $z_j$, since $d_j(i)$ is nondecreasing in $i$. Moreover, the polyhedron $Q$ obtained by replacing $z \in \Z^n_+$ with $z \in \R^n_+$ in the constraints of \eqref{e: potential problem aggr matroid} defines an integer polymatroid, since $\sum_{i=1}^N r_i$ is an integer-valued set function that is submodular, nondecreasing and normalized.

Since one can minimize a separable weakly convex function over a polymatroid with the greedy algorithm \cite{FedGro86}, we can solve the relaxation of \eqref{e: potential problem aggr matroid} in strongly polynomial time. Moreover, as the right-hand-sides of the constraints in \eqref{e: potential problem aggr matroid} are integral, the optimal solution $\bar z$ returned by the greedy algorithm is integral.

It is known that the independent set polytope of a matroid has the integer decomposition property (see \cite{Sch03}, Corollary 42.1e), and this property can also be generalized to the asymmetric case. In fact, if $\bar z \in Q \cap \Z^n$, then $\bar z =  \bar x^1 + \cdots +  \bar x^N$, where $\bar x^i$ is in $P^i \cap \Z^n$ for $i=1,\dots,N$, see Theorem 44.6 and Corollary 46.2c in \cite{Sch03}.
Moreover, the proof of Corollary 46.2c in \cite{Sch03} provides a polynomial-time algorithm to find
the $N$ incidence vectors $\bar x^1, \dots, \bar x^N$. Alternatively, one could apply Edmond's matroid union algorithm \cite{Edm68}.
Clearly $\bar x$ is a pure Nash equilibrium.

The statement on base matroid congestion games follows directly from Proposition \ref{reduction}.
\end{proof}

\medskip

\begin{proof}[Proof of Theorem \ref{t: perfect vertex cover PLS}]
$(i)$ 
We give a PLS-reduction of POS NAE 2SAT to an asymmetric \PVC\ on a bipartite graph $G$. The proof structure is similar to that of Theorem \ref{t: perfect matching PLS}, thus here we only outline the main differences. Again, the variables of POS NAE 2SAT map to players of \PVC\ and clauses map to a set of resources.
For each NAE clause $c_j$ we build the graph ``gadget'' in Fig.~\ref{fig: vc}$(i)$, that is a 8-cycle $u_j,s_j,v_j,t_j,z_j,\bar{t}_j,\bar{v}_j,\bar{s}_j,u_j$.

For $m_j \in \{1,2\}$ the vertex delays are defined as:
\begin{itemize}[noitemsep,nolistsep]
\item If $u \notin \{s_j,v_j\}$, then $d_u(i) = 0$ for $i=1,\dots,N$;
\item If $u = s_j$ and $c_j$ contains at least a constant equal to $1$, then $d_u(i) = 0$ for $i=1,\dots,N$; 
otherwise, $d_e(i) = w_j$ for $i=1,\dots,N$ if $m_j=1$, and $d_e(i) = w_j (i-1)$ if $m_j=2$;
\item If $u = v_j$ and $c_j$ contains at least a constant equal to $0$, then $d_u(i) = 0$ for $i=1,\dots,N$; 
otherwise, $d_e(i) = w_j$ for $i=1,\dots,N$ if $m_j=1$, and $d_e(i) = w_j (i-1)$ if $m_j=2$.
\end{itemize}
We build a bipartite graph $G$ by identifying $z_n$ and $u_{1}$ and $z_j$ and $u_{j+1}$ for $j=1,\dots,n-1$, see Fig.~\ref{fig: vc}$(ii)$.
For each $i=1,\dots,N$, we let $V_i = \{u_j,z_j: j=1,\dots,n\} \cup \{s_j,v_j,t_j : c_j \in C(i)\} \cup \{\bar{s}_j,\bar{v}_j,\bar{t}_j  : c_j \notin C(i)\}$ and we define $G_i$ as the subgraph of $G$ induced by nodes in~$V_i$.

Since each
$G_i$ is a cycle of length $4n$, it admits exactly two perfect vertex covers:
$W^i_0 = \{s_j, t_j  : c_j \in C(i)\} \cup \{\bar{s}_j, \bar{t}_j: c_j \notin C(i)\}$ and 
$W^i_1 = \{u_j, v_j : c_j \in C(i)\} \cup \{u_j, \bar{v}_j : c_j \notin C(i)\}$.
We define $g_i : \{W^i_0,W^i_1\} \rightarrow \{0,1\}$ such that $g_i(W^i_0) = 0$ and $g_i(W^i_1) = 1$ and we map strategy $W^i \in \{W^i_0,W^i_1\}$ of player $i$ to $x_i = g_i(W^i)$.

Let $y$ be a Nash equilibrium of \PVC\ corresponding to vertex covers $W^1,\dots,W^N$, and denote by $f_i$ the cost function of player $i$.
Then, for any state $y'$ obtained by switching perfect vertex cover $W^i$, we have that $f_i(y') - f_i(y) \ge 0$.
By construction, for $x = g(y)$ and $x' = g(y')$ it follows that $w(x) - w(x') =  f_i(y') - f_i(y) \ge 0$.

\smallskip

$(ii)$ 
Since \PVC\ is equivalent to \CVC\ when all subgraphs $G^i$ admit a perfect vertex cover, $(i)$ implies that asymmetric \CVC\ is PLS-complete on a bipartite graph. The result then follows by Proposition \ref{reduction}.
\smallskip

$(iii)$ 
Since \PVC\ is equivalent to \CSS\ when all subgraphs $G^i$ admit a perfect vertex cover, $(i)$ implies that asymmetric \CSS\ is PLS-complete on a bipartite graph. The result then follows by Proposition~\ref{reduction}.
\end{proof}

\medskip

\begin{proof}[Proof of Theorem~\ref{th: NP-hard 1}]
$(i)$ 
We give a polynomial reduction of POS NAE 3SAT to an asymmetric \PM\ on a bipartite graph $G$.
\emph{POS NAE 3SAT} is a \NP-complete variant of NAE 3SAT in the absence of negated variables (see \cite{Sch78}).

First, we define a map $h$ from any instance of POS NAE 3SAT to an instance of an asymmetric \PM\ on a bipartite graph $G$.
Given an instance $I$ of POS NAE 3SAT, we construct a congestion game $h(I)$ as follows. Denote by $C=\{c_1,\dots,c_n\}$ the clauses of $I$ and by $\{x_1,\dots,x_N\}$ its variables. 
Each variable of POS NAE 3SAT is a player of the \PM\ and each NAE clause is a set of resources, i.e.~a set of edges. Precisely, for each NAE clause $c_j$ we build the graph ``gadget'' in Fig.~\ref{fig: matching}$(i)$. The graph gadget of clause $c_j$ is a 4-cycle $u_j,v_j,z_j,\bar{v}_j,u_j$.

The weakly convex edge delays are defined as follows:
\begin{itemize}[noitemsep,nolistsep]
\item If $e=\bar{v}_j u_j$ or $e=\bar{v}_j z_j$, then $d_e(i) = 0$ for $i=1,\dots,N$;
\item If $e=v_ju_j$ or $e=v_j z_j$, then $d_e(1) = 0$, and $d_e(i) = i-2$ for $i = 2,\dots,N$.
\end{itemize}

Following the proof of Theorem~\ref{t: perfect matching PLS},
we map $I$ to an instance $h(I)$ of asymmetric \PM\ on a bipartite graph $G$, and 
we define a map $g$ from states of $h(I)$ to truth assignments of $I$.

Finally, we need to show that a state of $h(I)$ has social delay equal to zero if and only if it maps to a truth assignment $I$ that satisfies the POS NAE 3SAT formula.
Let $M^1,\dots,M^N$ be a state of $h(I)$, where $M^i \in \{M^i_0,M^i_1\}$ for all $i=1,\dots,N$.
Then $h(I)$ has social delay equal to zero if and only if for every clause $c_j$ there are at most two players $i \in \{1,\dots,N\}$ with $v_j u_j \in M^i$, and 
at most two players $i \in \{1,\dots,N\}$ with $v_j z_j \in M^i$.
This happens if and only if for every clause $c_j$ there are at most two variables $x_i$ contained in the clause and with value zero, and there are at most two variables $x_i$ contained in the clause and with value one.
Clearly, this happens if and only if $I$ satisfies the POS NAE 3SAT formula.


$(ii)$ 
Since \PM\ is equivalent to \CM\ when all subgraphs $G^i$ admit a perfect matching, $(i)$ implies that asymmetric \CM\ is NP-hard on a bipartite graph. 
Since modifying delay functions by a constant maintains weak convexity, the result then follows by Proposition \ref{reduction}.

$(iii)$ 
Since \PM\ is equivalent to \CEC\ when all subgraphs $G^i$ admit a perfect matching, $(i)$ implies that asymmetric \CEC\ is NP-hard on a bipartite graph. 
Since modifying delay functions by a constant maintains weak convexity, the result then follows by Proposition \ref{reduction}.
\end{proof}

\newpage

\section*{Figures}

\begin{figure}[h]
\begin{tabular}{cc}
  \def\svgwidth{.19\columnwidth}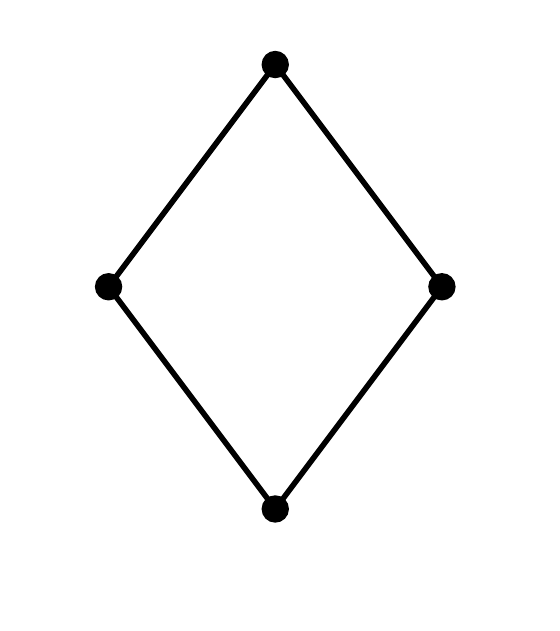 & \qquad \qquad
  \def\svgwidth{0.45\columnwidth}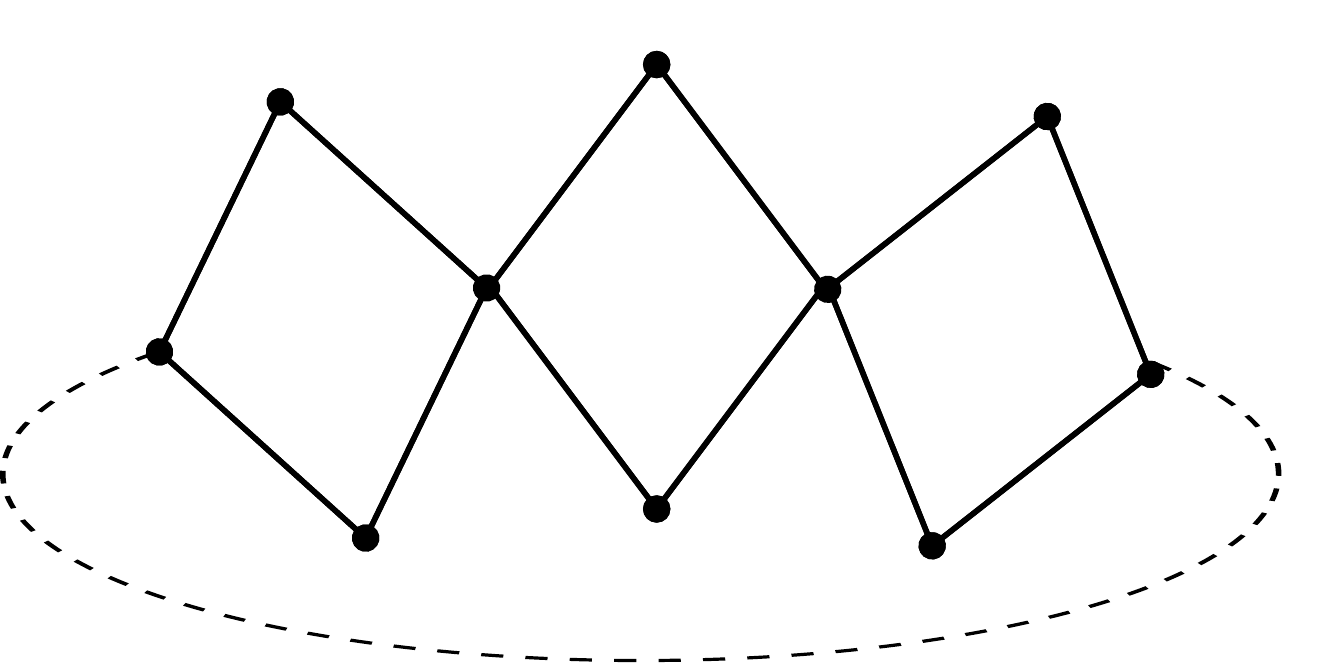\\
$(i)$ & $(ii)$
\end{tabular} 
\caption{Reduction from POS NAE 2SAT to asymmetric \PM\ on a bipartite graph.}
\label{fig: matching}
\end{figure}

\begin{figure}[h]
\begin{tabular}{cc}
  \def\svgwidth{.19\columnwidth}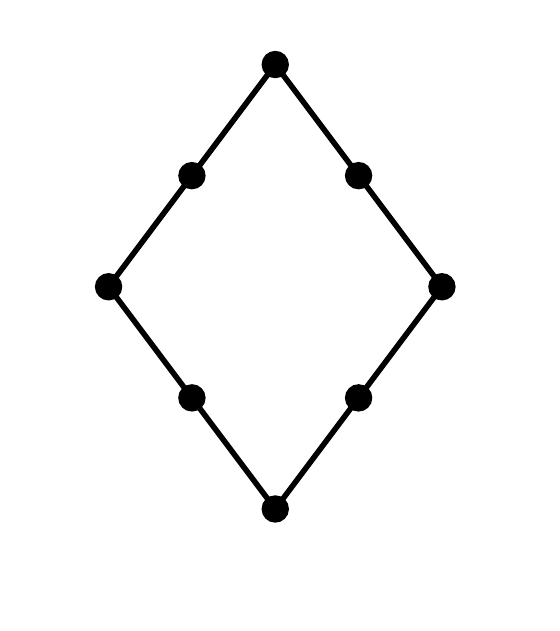 & \qquad \qquad
  \def\svgwidth{0.45\columnwidth}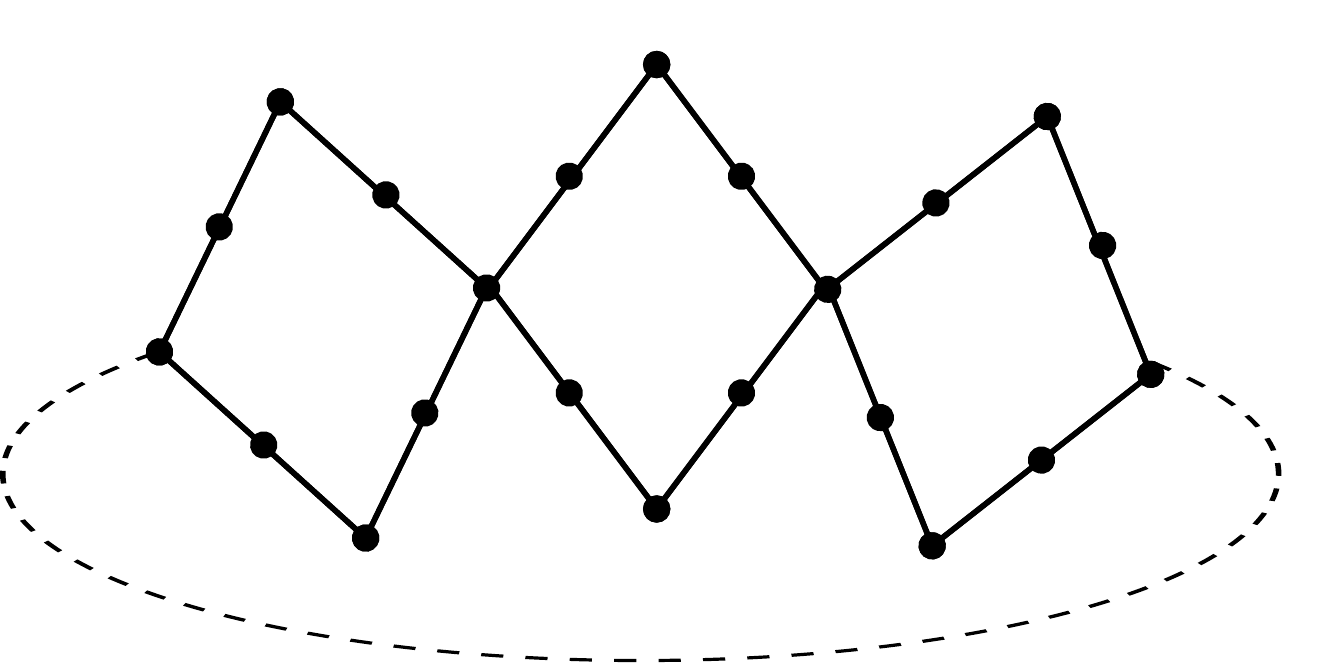\\
$(i)$ & $(ii)$
\end{tabular} 
\caption{Reduction from POS NAE 2SAT to asymmetric \PVC\ on a bipartite graph.}
\label{fig: vc}
\end{figure}

\newpage

\bibliographystyle{plain}
\bibliography{biblio}

\end{document}